\newcommand\vldbpagestyle{empty}
\newcommand{\defeq}{\mathrel{\mathop:}=}
\def\scs{\mathcal{S}}
\def\comrel{\textsc{Com}}
\def\rel#1{\mathsf{#1}}
\def\scscom{\scs_+}
\def\dcom#1#2{#1[#2]}
\def\set#1{\mathord{\{#1}\}}
\def\score{\mathsf{score}}
\newcolumntype{R}{>{\columncolor[HTML]{D0D3D4}}l}
\newcommand{\att}[1]{{\normalfont\textrm{#1}}}
\newcommand{\val}[1]{{\normalfont\texttt{#1}}}
\newcommand{\eat}[1]{}
\begin{document}
\setlength{\abovedisplayskip}{2.5pt}
\setlength{\belowdisplayskip}{2.5pt}

\title{Using Database Dependencies to Constrain \\ Approval-Based Committee Voting in the Presence of Context}

\settopmatter{authorsperrow=3}

\author{Roi Yona}
\orcid{0009-0005-5285-8687}
\affiliation{%
  \institution{Technion}
}
\email{roi.yona@campus.technion.ac.il}

\author{Benny Kimelfeld}
\affiliation{%
  \institution{Technion}
  \streetaddress{}
}
\email{bennyk@cs.technion.ac.il}

\begin{abstract}
In Approval-Based Committee (ABC) voting,  each voter lists the candidates they approve and then a voting rule aggregates the individual approvals into a committee that represents the collective choice of the voters. An extensively studied class of such rules is the class of ABC scoring rules, where each voter contributes to each possible committee a score based on the voter's approvals. We initiate a study of ABC voting in the presence of constraints about the general context surrounding the candidates. Specifically, we consider a framework in which there is a relational database with information about the candidates together with integrity constraints on the relational database extended with a virtual relation representing the committee. For an ABC scoring rule, the goal is to find a committee of maximum score such that all integrity constraints hold in the extended database. 

We focus on two well-known types of integrity constraints in relational databases: tuple-generating dependencies (TGDs) and denial constraints (DCs). The former can express, for example, desired representations of groups, while the latter can express conflicts among candidates.  ABC voting is known to be computationally hard without integrity constraints, except for the case of approval voting where it is tractable. We show that integrity constraints make the problem NP-hard for approval voting, but we also identify certain tractable cases when key constraints are used. We then present an implementation of the framework via a reduction to Mixed Integer Programming (MIP) that supports arbitrary ABC scoring rules, TGDs and DCs. We devise heuristics for optimizing the resulting MIP, and describe an empirical study that illustrates the effectiveness of the optimized MIP over databases in three different domains.
\end{abstract}

\maketitle

\pagestyle{\vldbpagestyle}

\ifdefempty{\vldbavailabilityurl}{}{
}

\section{Introduction}

In Approval-Based Committee (ABC) voting, each voter states their approved set of candidates, and the goal is to aggregate all voter approvals into a choice of a winning committee (or several alternative winning committees) of some specified size~\cite{lackner2023multi,DBLP:conf/ijcai/LacknerS18}. The aggregation method is called an \emph{ABC rule}. In this work, we focus on a class of ABC rules that has been extensively studied, namely the ABC \emph{scoring rules}: 
every voter contributes a score to every set of candidates of the specified size, and a set of candidates is a winning committee if it accumulates the highest score;
moreover, the score of a voter is determined by two numbers: the number of committee members approved by the voter and the total number of candidates approved by the voter~\cite{DBLP:journals/jet/LacknerS21}. As a special case, in a \emph{Thiele rule} the score is determined only by the number of committee members approved~\cite{thiele}. The class of Thiele rules includes such extensively studied rules as (Proportional) Approval Voting ((P)AV) and Chamberlin-Courant (CC).

A winning committee should often satisfy criteria beyond having the highest voter support. For example, it may need to feature a set of skills so that it can properly function, must avoid conflicts among members,  must provide sufficient representation of different voter groups, must avoid over-representing of other groups, and so on. 
There is a significant body of work
on multi-winner elections (including ABC), where each candidate is associated with a set of attributes (e.g., political party and gender) and where  diversity constraints stating the legal quantities of occurrences for each attribute value can be phrased
\cite{DBLP:conf/aaai/BredereckFILS18, 
DBLP:journals/ai/LangS18,
DBLP:conf/ijcai/CelisHV18,DBLP:conf/ijcai/Relia22,masařík2023generalisedtheoryproportionalitycollective}.
\citet{DBLP:conf/aaai/AzizBCEFW15} initiated a line of research on ABC with Justified Representation (JR) that requires all voter sets that are \emph{large} and \emph{cohesive} to be \emph{well-represented} in the committee. The exact meaning of being large, cohesive, and well-represented differs between JR and stronger versions, such as Extended Justified Representation (EJR)~\cite{DBLP:conf/aaai/AzizBCEFW15}, Proportional Justified Representation (PJR)~\cite{DBLP:conf/aaai/0001EHLFS18}, and the more recent EJR+ and PJR+~\cite{DBLP:conf/sigecom/Brill023}. 
\citet{DBLP:conf/ijcai/YangW18} represented candidates as vertices of a graph and studied 
versions of multi-winner voting (different from ABC voting) that involve committees constrained to possess some subgraph property, such as being independent (e.g., no conflicts) or connected. 
\citet{masařík2023generalisedtheoryproportionalitycollective} take a combined approach that considers both diversity and fair representation, utilizing constraints with matroid structure. Similarly, \citet{ mavrov2023fairmultiwinnerelectionsallocation} study fairness properties of general constraints over the legal committees. In addition, committee voting can be viewed as a special case of \emph{participatory budgeting}, where the committee members can be seen as projects with different costs, and the goal is to decide on a set of projects (committee) that meets a given budget~\cite{cabannes2004participatory}; for this problem, various constraints have been studied, referring to dependencies between projects~\cite{DBLP:conf/ijcai/0001ST20,Rey2023AGF}.

The above rich literature on constraints in committee voting has been so far restricted to limited knowledge about the candidates, 
mainly in the form of labels (membership in groups). Consequently, the constraint formalisms have been restricted to low-level ones, such as direct relationships between committee members or cardinality requirements on label occurrences. 
Yet, the labels and relationships are naturally derived from contextual knowledge about the candidates, such as different types of relationships between candidates and between their associated entities. To use existing tools and voting algorithms, we need to translate the user's requirements to these formalisms. This leads to several challenges. First, it is not clear which types of user constraints are expressible in the low-level formalisms. Second, even if a translation exists, it might be inefficient. Third, by ignoring the patterns that lead to low-level constraints, we might lose opportunities for efficient algorithms that can rely on these patterns. 

In this work, we propose an approach to extend the treatment of constraints in committee voting by providing users with fragments of First-Order Logic to phrase committee constraints, referring to the whole contextual knowledge surrounding the candidates. To this aim, we initiate a study of ABC voting in the presence of contextual constraints by making a connection to database theory. 
In our framework, there is a relational database with information about the candidates; moreover,  there are integrity constraints on this relational database,  extended with a virtual relation $\comrel$ that represents the committee. Given an ABC scoring rule, the goal is to find a committee of maximum score such that all integrity constraints at hand are held in the extended database. 

We illustrate the framework by focusing on two fundamental and widely used types of integrity constraints in the context of relational databases: tuple-generating dependencies (TGDs)~\cite{DBLP:journals/jacm/BeeriV84} and denial constraints (DCs)~\cite{DBLP:conf/dagstuhl/BertossiC03}. The former can express, for example, representations of groups (or representation of a group under the condition that another group is represented), while the latter can express various conflicts among candidates.


It is well known that TGDs and DCs are expressible in first-order logic. As such, they cannot express diversity constraints that involve ranges of occurrences of attribute values
or constraints requiring that large and cohesive sets are well represented (actually, the latter seem to require the use of second-order logic). However, our framework supports a rich collection of constraints that go well beyond constraints on the attributes of a single relation, because TGDs and DCs can express complex relationships among candidates that involve multiple relations of a full database.


Our framework is akin to the framework of 
\emph{election databases}~  \cite{DBLP:conf/ijcai/KimelfeldKS18,DBLP:conf/pods/KimelfeldKT19}, which use the virtual $\rel{\textsc{Winner}}$ relation in single-winner elections with incomplete voter data.  There is, however, a major difference between the two frameworks: in these earlier papers, the database has no impact on the election, but it is used to query the possible outcomes; in contrast, here we use the database and the constraints to determine the winning committee (and defer the study of query answering in our framework to future work).

After describing the framework, we present a preliminary complexity analysis. It is known that (unconstrained) ABC voting is computationally hard for every Thiele rule, except for AV~\cite{DBLP:conf/atal/AzizGGMMW15,DBLP:journals/ai/SkowronFL16}. Not surprisingly,
integrity constraints make the problem NP-hard for AV. Yet, our complexity analysis illustrates that by being aware of the integrity constraints that hold in the external database, we can identify tractable special cases (e.g., by reduction to min-cost max-flow). We then develop an implementation of the framework through a reduction to Mixed Integer Programming (MIP) that applies to arbitrary ABC scoring rules, TGDs and DCs. 
We also present three heuristics for optimizing the MIP. Finally, we report on an empirical study on three different domains: political elections, hotels, and movies. We draw conclusions on the effectiveness of the optimizations, as well as the effect of various parameters on the performance of the MIP: the database constraints, the number of voters and candidates, the committee size,  and the scoring rule of choice.

\section{Preliminaries}
We begin by introducing basic concepts and notation.



\paragraph{Approval-Based Committee voting.}
In Approval-Based Committee (ABC) voting, we have a set $C=\set{c_1,\dots,c_m}$ of candidates, a set $V=\set{v_1,\dots,v_n}$ of voters, an approval profile $A:V\rightarrow 2^C$ that maps every voter $v_i$ to a set $A(v_i)\subseteq C$ of candidates, 
and a desired committee size $k$. An \emph{ABC rule} determines which $k$-subsets of $C$ (i.e., sets of $k$ candidates) are the winning committees. As a special case, an \emph{ABC scoring rule}~\cite{DBLP:journals/jet/LacknerS21} is defined by a function $f(x,y)$ that determines the score contributed to a committee $B$ by a voter $v$ who approves a total of $y$ candidates where $x$ of them are in $B$. 
Formally, the winning committees by $f$ are the $k$-subsets $B$ of $C$ that maximize
$\score_f(B)\defeq\sum_{v\in V}f(|B\cap A(v)|,|A(v)|)$.
We make the usual assumption that 
$f(x,y)\geq f(x',y)$ if $x\geq x'$ 
(more approved members do not lower the score).

The class of ABC scoring rules generalizes the class of \emph{Thiele rules}~\cite{thiele} where $f(x,y)$ is a non-decreasing function $w(x)$ that depends only on the number of approved candidates in the committee.  Examples include \emph{Approval Voting} (AV) where $w(x) = x$ (i.e., the score contributed by a voter is the number of approved candidates), Proportional Approval Voting (PAV) where $w(x) = \sum_{i=1}^x{1/i}$ (i.e., the $i$th approved candidate increases the voter's score by $1/i$), and 
Chamberlin-Courant (CC) where $w(x) = \min(x,1)$ (i.e., the score is $1$ if the committee includes one or more approved candidates). An ABC scoring rule that is \emph{not} a Thiele rule is SAV (Satisfaction Approval Voting), where $f(x,y)=x/y$ (i.e., the number of approved committee members divided by the total number of candidates approved by the voter). 

It has been established that finding a winning committee is computationally hard for all Thiele rules, except for AV where the problem is straightforward. More precisely, after \citet{DBLP:conf/atal/AzizGGMMW15}
proved hardness for common Thiele rules, \citet[Theorem~5]{DBLP:journals/ai/SkowronFL16} proved a general result: for every Thiele rule it is NP-complete to decide whether there is a committee with a score above a given threshold, with the exception of the rules where $w(x)$ is linear in $x$ (i.e., rules equivalent to AV).


\paragraph{Relational databases.}
A \emph{relational schema} $\scs$ consists of a finite collection of \emph{relation symbols} $R$, each associated with an arity $k$. A \emph{database} $D$ over the schema $\scs$ associates a finite relation $R^D$ of $k$-tuples to each $k$-ary relation symbol $R$. A \emph{constraint} over a relational schema $\scs$ is a formula in some logical formalism (e.g., first-order logic) with relation symbols in $\scs$. We write $D\models\gamma$ to denote that the database $D$ satisfies the constraint $\gamma$. If   $\Gamma$ is a set of constraints, then  $D\models\Gamma$ denotes  that $D$ satisfies every constraint in $\Gamma$.
Next, we discuss two fundamental classes of constraints. 

An \emph{atomic formula} has the form $R(\tau_1,\dots,\tau_k)$ 
or of the form $\tau_1\theta\tau_2$, where $R$ is a $k$-ary relation symbol, each $\tau_i$ is either a constant or a variable,  and $\theta$ is a predefined comparison operator (e.g., $=$, $\leq$, $>$ and so on). We call 
$R(\tau_1,\dots,\tau_k)$ a \emph{relational atom} and $\tau_1\theta\tau_2$ a \emph{comparison atom}. 
A \emph{denial constraint} (DC) is an expression of the form
$$\forall\vec{x}[\neg(\varphi(\vec x)\land\psi(\vec x))]$$ 
where $\vec x$ is a sequence of variables, $\varphi(\vec x)$ is a conjunction of relational atoms, and $\psi(\vec x)$ is a conjunction of comparison atoms (where all variables belong to $\vec x$)~\cite{DBLP:conf/dagstuhl/BertossiC03}. Note that every \emph{key} constraint and, more broadly, every \emph{functional dependency} is a DC. In particular, if $R$ is a $k$-ary relation symbol, then by saying that  $i\in\set{1,\dots,k}$ is a \emph{key attribute} of $R$ we mean that the databases $D$ of interest must satisfy the DC asserting that no two distinct tuples in $R^D$ agree on the $i$-th attribute. 

A \emph{tuple-generating dependency} (TGD) has the form 
$$\forall\vec x[\varphi(\vec x) \rightarrow  \exists\vec y [\psi(\vec x,\vec y)]],$$ where $\vec x$ and $\vec y$ are disjoint sequences of variables and $\varphi(\vec x)$ and $\psi(\vec x,\vec y)$ are conjunctions of relational atoms~\cite{DBLP:journals/jacm/BeeriV84}. TGDs generalize common constraints like \emph{inclusion constraints}; for example, in the database of \Cref{fig:confrence_committee_db_example}, the DC $$\forall x,y[\rel{Author}(x,y) \rightarrow  \exists z [\rel{Pub}(y,z)]$$ requires every publication in the $\rel{Author}$ relation to occur in the $\rel{Pub}$ relation. Note that $\varphi(\vec x)$ can be an empty conjunction, hence a tautology (meaning that $\exists\vec y [\psi(\vec x,\vec y)]$ should hold unconditionally), and we denote such $\varphi(\vec x)$ by $\mathbf{true}$.

{
\def\vals#1{\val{\small #1}}
\begin{figure}[t]
\centering
\def\tabspace{\,\,\,}
\fbox{\parbox{0.3\textwidth}{
\begin{tabular}[b]{ll}
\multicolumn{2}{l}{Approvals:}  \\
$v_1$: & \vals{Ann}, \vals{Dave}    \\
$v_2$: & \vals{Ann}, \vals{Bob}, \vals{Dave}     \\
\end{tabular}
\begin{tabular}[b]{ll}
$v_3$: & \vals{Ann}, \vals{Eva}    \\
$v_4$: & \vals{Cale}              \\
$v_5$: & \vals{Bob}, \vals{Dave}   \\
\end{tabular}}}
\vskip1em
\renewcommand{\arraystretch}{1.0}
\begin{tabular}[t]{|l|}
    \multicolumn{1}{R}{$\rel{Topic}$}\\
    \hline
    \att{name} \\          
    \hline
    \vals{AI}   \\
    \vals{ML}   \\
    \vals{OS}   \\
    \vals{PL}   \\
    \hline
\end{tabular}\tabspace
\begin{tabular}[t]{|ll|}
    \multicolumn{2}{R}{$\rel{Supervise}$}\\
    \hline
    \att{advisor}                & \att{advised} \\
    \hline
    \vals{Ann}                    & \vals{Bob}    \\
    \vals{Bob}                    & \vals{Fred}   \\
    \vals{Cale}                   & \vals{Eva}   \\
    \vals{Dave}                   & \vals{Fred}   \\
    \hline
\end{tabular}\tabspace
\begin{tabular}[t]{|ll|}
    \multicolumn{2}{R}{$\rel{Author}$}\\
    \hline
    \att{author}                  & \att{pub} \\
    \hline
    \vals{Ann}                     & \vals{p1}   \\
    \vals{Ann}                     & \vals{p2}   \\
    \vals{Bob}                     & \vals{p1}   \\
    \vals{Bob}                     & \vals{p3}   \\
    \vals{Cale}                    & \vals{p4}   \\
    \vals{Dave}                    & \vals{p5}   \\
    \hline
\end{tabular}\tabspace
\begin{tabular}[t]{|ll|}
    \multicolumn{2}{R}{$\rel{Pub}$}\\
    \hline
    \att{pub}     & \att{topic} \\
    \hline
    \vals{p1}            & \vals{ML}  \\
    \vals{p2}            & \vals{PL}  \\
    \vals{p3}            & \vals{OS}  \\
    \vals{p4}            & \vals{AI}  \\
    \vals{p5}            & \vals{OS}  \\
    \hline
\end{tabular}\tabspace
\caption{ABC voting with external context.}
\label{fig:confrence_committee_db_example}
\vskip-1em
\end{figure}
}
\section{ABC Voting in the Presence of  Constraints}\label{section:formal_framework}

Consider an approval profile $A$ over a set $C$ of candidates. Suppose that we have information about the candidates in a database $D$ over a schema $\scs$. We view $D$ as providing external context about the candidates. We would like to be able to express constraints on the desired committee, so that we restrict the collection of eligible committees to those that satisfy the given constraints. Towards this goal, we regard the constraints as ordinary database constraints on an \emph{extended} schema $\scscom$ defined to be the schema  $\scs$ augmented with a new relation symbol $\comrel/1$ that represents a hypothetical committee. (We assume that $\scs$ itself does not contain any relation symbol with the name $\comrel$.) Given a database $D$ and a set $B$ of candidates, we write $\dcom DB$ to denote the database over $\scscom$ that consists of $D$ and of the relation $B$ interpreting the relation symbol $\comrel$. Thus,
$R^{\dcom DB}\defeq R^D$ for every relation symbol $R$ of $\scs$, and 
$\comrel^{\dcom DB}\defeq B$.

An ABC setting with external context is a tuple of the form $(V,C,A,k,f,\scs,D,\Gamma)$ where,
$(V,C,A,k,f)$ is an ordinary ABC setting with the objective of finding a committee of size $k$ under the ABC scoring rule $f$;
$D$ is a database over the schema $\scs$; and
$\Gamma$ is a set of constraints over the schema $\scscom$.
A \emph{legal} committee is a $k$-subset $B$ of $C$ such that
$\dcom DB\models\Gamma$. A \emph{winning legal committee} is a legal committee $B$ such that $\score_f(B)$ is maximum among all legal committees $B'$. An  \emph{external constraint} is a constraint over $\scscom$.
This work focuses on DCs and TGDs.
\begin{example}\it\em
The approval profile of \Cref{fig:confrence_committee_db_example} (top) has $V=\set{v_1,\dots,v_5}$ and $C=\set{\val{Ann},\val{Bob},\val{Cale},\val{Dave},\val{Eva}}$. Assume that we seek a Program Committee (PC) for a conference. The relations in the figure give information about candidate publications and advisory relationships. The following DC states that the PC cannot include both a person and the advisor of that person.
\begin{align*}
\forall c_1,c_2 [
\neg\big(
    \rel{Supervise}(c_1,c_2) \land 
    \comrel(c_1)\land\comrel(c_2)
\big)
].
\end{align*}
For $k=3$ and AV, the committee $\set{\val{Ann}, \val{Bob}, \val{Dave}}$ is illegal, since \val{Ann} is the advisor of \val{Bob}. Note that every set of three candidates (among the five) is a legal committee, as long as it does not include both \val{Ann} and \val{Bob} or both \val{Dave} and \val{Eva}. The reader can verify that $B=\set{\val{Ann}, \val{Cale}, \val{Dave}}$ is a winning committee with $\score_{\mathrm{AV}}(B)=2+2+1+1+1=7$. 
\qed
\end{example}

\begin{example}\label{ex:tgd}\it\em
Continuing the running example, the TGD 
\begin{align*}
\forall t \big[ \rel{Topic}(t) \rightarrow \exists c,p 
    [\rel{Author}(c,p)\land \rel{Pub}(p,t)\land\comrel(c)]
\big]
\end{align*}
states that the committee has at least one member with a publication on each topic.

One can also include a specific TGD stating that at least one committee member should have both \val{ML} and \val{PL} publications (say, since the blend is central to the conference):
\begin{align*}
\textbf{true} \rightarrow\exists c,f,g \big[&\rel{Author}(c,f)\land\rel{Author}(c,g)\land\\
&\rel{Pub}(f,\val{ML})\land \rel{Pub}(g,\val{PL})\land\comrel(c)\big]
\end{align*}
For $k=3$ and AV as the voting rule, the reader can verify that the winning committee is $\set{\val{Ann}, \val{Bob}, \val{Dave}}$, since the three members cover all topics and, moreover, \val{Ann} published both \val{ML} and \val{PL} papers (namely \val{p1} and \val{p2}). Finally, the rule 
\begin{align*}
\forall c_1,c_2 \big[ 
&\rel{Supervise}(c_1,c_2) \land 
    \comrel(c_1)\land\comrel(c_2) \rightarrow \\ 
&    \exists p 
    [\rel{Author}(c_1,p)\land \rel{Pub}(p,\val{ML})]
\big]
\end{align*}
states that we allow for supervision between committee members only if the supervisor published in ML.\qed
\end{example}

\section{Complexity Study}
We now discuss the complexity of ABC in the presence of constraints. We focus on AV, which is the only Thiele rule where there is hope for a tractability result, since finding a winning committee is NP-hard under every other Thiele rules~\cite{DBLP:conf/atal/AzizGGMMW15,DBLP:journals/ai/SkowronFL16}. 
Our analysis is carried out by considering specific patterns of schemas and constraints. We adopt the notion of  \emph{data complexity}~\cite{DBLP:conf/stoc/Vardi82}, where we assume that the schema and constraints are fixed and the input consists of the remaining components. In particular, every combination of a schema $\scs$ and set $\Gamma$ of constraints gives rise to a separate computational problem. We begin with a result about TGDs.
\def\thmtgdtxt{Let $t\geq 1$ be a fixed integer, and let $\scs$ be a schema consisting of the unary relation symbols $R_1,\dots,R_t$ and of the binary relation symbols $S_1,\dots,S_t$. Assume that $\Gamma$ consists of the  constraints $$\forall x\big[R_i(x)\rightarrow\exists c[S_i(c,x)\land \comrel(c)]\big]\,,$$
where $1\leq i\leq t$.
If $t\leq 2$ and if the first attribute of each $S_i$ is a key attribute, then a winning committee under AV can be found in polynomial time. 
In \emph{every} other case, it is NP-complete to determine whether a legal committee exists.
}
\begin{theorem}\label{thm:tgd}
\thmtgdtxt
\end{theorem}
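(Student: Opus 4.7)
The plan is to prove NP membership uniformly and then supply a polynomial-time algorithm for the single tractable configuration together with NP-hardness reductions for every other configuration. Membership in NP is immediate: for fixed schema $\scs$ and fixed $\Gamma$, given a candidate committee $B$ of size $k$, verifying $\dcom DB \models \Gamma$ amounts to iterating over the polynomially many elements of $R_i^D$ and checking, for each $x \in R_i^D$, that some tuple $(c,x) \in S_i^D$ has $c \in B$.

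For the tractable case, assume $t \leq 2$ and that the first attribute of each $S_i$ is a key. The key forces each candidate $c$ to have at most one partner $u_i(c) \in R_i^D \cup \{\bot\}$ via $S_i$, so $\Gamma$ collapses to the purely combinatorial requirement that, for every $x \in R_i^D$, some $c \in B$ satisfies $u_i(c) = x$. For $t = 1$, the classes $C_x = u_1^{-1}(x)$ are pairwise disjoint, so after the feasibility check $|R_1^D| \leq k$ the AV-maximization decomposes into picking one high-degree representative from each $C_x$ and then filling the remaining $k - |R_1^D|$ slots by approval rank; this is a polynomial-time bipartite assignment. For $t = 2$, view each candidate as an edge in a bipartite graph on $(R_1^D \cup \{w_1\}) \cup (R_2^D \cup \{w_2\})$, where the wildcards $w_1,w_2$ absorb $\bot$-sides. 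A winning committee is then a max-weight selection of exactly $k$ edges whose incidences cover every non-wildcard vertex. Since vertex--edge incidence matrices of bipartite graphs are totally unimodular, the LP relaxation of this cardinality-constrained max-weight edge-cover problem has integer optima, and a min-cost flow formulation yields the committee in polynomial time.

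For hardness I split by the number of relations. When $t \geq 3$, even with keys everywhere, reduce from 3-Dimensional Matching: on input $(X,Y,Z,T)$ populate $R_1^D, R_2^D, R_3^D$ with $X,Y,Z$; introduce a distinct candidate $c_\tau$ per triple $\tau = (x,y,z) \in T$, placing $(c_\tau,x), (c_\tau,y), (c_\tau,z)$ in $S_1^D, S_2^D, S_3^D$ respectively, so the keys trivially hold; and set $k = |X|$. A legal committee exists iff the 3DM instance has a perfect matching; for $t > 3$ the extra $R_i^D$ are emptied so their TGDs are vacuous. When $t \in \{1,2\}$ and some $S_j$ lacks a key on its first attribute, reduce from Set Cover through $S_j$: candidates encode sets, $R_j^D$ encodes the ground set, $S_j$ encodes membership, dummy candidates (in no $S_j$) pad the committee to exactly $k$, and any other $R_i^D$ is emptied. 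A legal committee of size $k$ then exists iff the Set Cover instance admits a cover of size at most $k$.

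The main obstacle is the $t = 2$ tractable case, where one must justify that the cardinality-\emph{exactly}-$k$ max-weight edge cover stays polynomial even when $k$ strictly exceeds the minimum cover size, so the algorithm must admit ``free'' edges contributing only to the AV score. The bipartite structure enforced by the two key constraints is precisely what sustains the totally unimodular LP / min-cost flow argument; once a third key-carrying $S_i$ appears, the underlying covering problem already encodes 3-dimensional matching, and the approach must break down.
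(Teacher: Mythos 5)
Your proposal is correct and follows essentially the same route as the paper: a greedy choice of a top-approval representative per class for $t=1$, a min-cost-flow (equivalently, your totally unimodular edge-cover LP) formulation for $t=2$ under the key constraints, and hardness via Set Cover when some $S_i$ lacks a key and via 3-Dimensional Matching for $t\geq 3$. The only differences are organizational and harmless: you justify integrality by total unimodularity where the paper builds the flow network explicitly, and you cover the general-$t$ hardness cases by emptying the unused relations (noting the 3DM instance satisfies all keys) instead of the paper's padding reductions that lift $t=1$ and $t=3$ to larger $t$.
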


The constraint $\forall x[R_i(x)\rightarrow\exists c[S_i(c,x)\land \comrel(c)]]$ asserts that for every $x$ in $R_i$, there is a committee member that relates to $x$ via $S_i$; for example, for every state $x$, at least one committee member $c$ lives in $x$.
The tractability part of the theorem can be proved by translating our framework to that of Bredereck et al.~\cite{DBLP:conf/aaai/BredereckFILS18}. Specifically, the combination of AV and the specific constraints for $\leq 2$ induce a \emph{diversity specification} for committee voting with labeled candidates, where the labeling is \emph{2-laminar} and the scoring function is \emph{separable}. In the Appendix, we give a direct proof by showing a greedy algorithm in the case of $t=1$, and a reduction to \emph{minimum-cost maximum flow}~\cite{DBLP:journals/mor/GoldbergT90} for $t=2$.

Hardness is proved by reductions from minimum set cover (in the absence of key constraints) and from exact matching by 3-sets (to account for key constraints). 
The message of \Cref{thm:tgd} is twofold. First, the problem of finding a legal committee of maximum score can be hard even for simple constraints, regardless of the voting rule. Second, being cognizant of the key constraints in the database can reduce the complexity of the problem and give rise to polynomial-time algorithms. The following states a result of a similar flavor for DCs.
\def\thmdctxt{
Let $\scs$ be a schema that contains a binary relation $R$, and assume that $\Gamma$ consists of the single DC
\begin{align*}
\forall c_1,c_2,x 
\big[&\neg(\comrel(c_1)\land\comrel(c_2)\land \\
& R(c_1,x)\land R(c_2,x)\land c_1\neq c_2)\big]
\end{align*}

If the first attribute of $R$ is a key attribute, then a winning committee under AV can be found in polynomial time. 
Otherwise, it is NP-complete to determine whether a legal committee exists.
}

\begin{theorem}\label{thm:dc}
\thmdctxt
\end{theorem}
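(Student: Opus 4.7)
The plan is to treat the two cases separately, both reducing to standard algorithmic patterns.

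\emph{Tractable case (key on the first attribute of $R$).} The key constraint forces every candidate $c$ to appear in at most one tuple $R(c,x_c)$; I would call $x_c$ the \emph{partner} of $c$ when it exists. The DC then simplifies to: no two committee members share a partner. I would therefore partition $C$ into \emph{free} candidates (those with no partner) and \emph{conflict groups} $C_x\defeq\set{c:R(c,x)\text{ holds}}$ indexed by each witnessed second-attribute value $x$. A legal committee is exactly an arbitrary subset of free candidates together with at most one representative from each conflict group. Using the decomposition $\score_{\mathrm{AV}}(B)=\sum_{c\in B}a(c)$ where $a(c)\defeq|\set{v\in V:c\in A(v)}|$, the choice inside each group is independent of the rest, so if one selects any member of a group at all, the top-scoring member is always optimal. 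The algorithm is then: (i) compute $a(c)$ for every $c$; (ii) form a pool $P$ consisting of the free candidates together with a top-scoring representative of each $C_x$; (iii) return the $k$ highest-scoring members of $P$, or report infeasibility if $|P|<k$. This runs in polynomial time.

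\emph{Hardness case.} Without the key assumption, candidates may have many partners, and two candidates conflict whenever they share \emph{any} partner. I would reduce from $k$-Independent Set: given a graph $G=(V,E)$ and an integer $k$, set $C\defeq V$, introduce a fresh value $x_e$ for every edge $e\in E$, and define $R\defeq\set{(u,x_e),(v,x_e):e=\set{u,v}\in E}$. Then two distinct candidates $u,v$ share a second-attribute value of $R$ iff $\set{u,v}\in E$, so the $k$-subsets of $C$ that satisfy the DC are exactly the size-$k$ independent sets in $G$. Since $k$-Independent Set is NP-complete and the reduction is polynomial, deciding whether a legal committee exists is NP-hard; membership in NP is immediate because $\dcom{D}{B}\models\Gamma$ can be verified by a single scan of $R^D$ for conflicting pairs within $B$.

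\emph{Main obstacle.} I expect the principal subtlety to lie in the optimality justification for the greedy construction in the tractable case. It amounts to a short exchange argument---replacing any chosen member of a conflict group by that group's top-scoring member preserves legality (still at most one per group) and cannot decrease the AV score---but it should be phrased carefully to cover the edge cases in which a conflict group contributes no member to the committee and in which the top-scoring member is not unique, and to confirm that $|P|$ is indeed the maximum legal committee size, so the infeasibility test in step (iii) is correct.
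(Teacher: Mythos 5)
Your proposal is correct and follows essentially the same route as the paper: the tractable case is handled by the same greedy construction (one top-scoring representative per conflict group plus the conflict-free candidates, then take the top $k$), and hardness comes from the same edge-gadget reduction from Independent Set, with your use of the decision version ($k$-Independent Set) being only a cosmetically cleaner packaging of the paper's reduction via Maximum Independent Set. The exchange argument you flag as the main obstacle is exactly the (implicit) justification in the paper's proof, and your treatment of feasibility ($|P|$ being the maximum legal committee size) is sound.
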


The DC in  \Cref{thm:dc} asserts that no two distinct candidates can participate in the committee if they have a common neighbor in $R$ (e.g., they are affiliated with the same university). The tractability part is proved by reducing the problem to the unconstrained version of ABC with the AV rule. The hardness part is proved with a reduction from the maximum independent set problem.

\section{Mixed Integer Programming Implementation}
As often done for intractable election problems in computational social choice~\cite{DBLP:journals/tdasci/ChakrabortyDKKR21,DBLP:conf/ijcai/DudyczMMS20,DBLP:conf/sigecom/Xia12,lackner2023multi,DBLP:journals/jossw/LacknerRK23}, we present a Mixed Integer Program (MIP)
to find a winning committee in the presence of TGDs and DCs.\footnote{Another possibility would have been to use a \emph{Max-SAT solver}, as suggested before on approval voting~\cite{DBLP:conf/ijcai/BarrotL16,DBLP:conf/ijcai/MarkakisP20}, though we are not aware of reported empirical results for this approach.} (Interestingly, this approach is also used in Computational Social Choice for theoretical upper bounds, typically Fixed Parameter Tractability~\cite{DBLP:conf/ijcai/BetzlerHN09,DBLP:conf/ecai/Yang14,DBLP:conf/pods/KimelfeldKT19}.)
 Recall that the problem instance has the form $(V,C,A,k,f,\scs,D,\Gamma)$ and that the goal that is to find a legal committee $B$ of maximum score.
Let $C=\set{c_1,\dots,c_m}$ and $V=\set{v_1,\dots,v_m}$. 

\subsection{Variables} The program uses the following variables.
\begin{itemize}[nosep]
\item $z_1,\ldots,z_m$ take values in $\set{0,1}$, where $z_j=1$ denotes that  $c_j$ is selected for the committee (i.e., $c_j\in B$).
\item $u_1,\dots,u_n$ take values in the domain $\mathbb{N}$ of the natural numbers, where $u_i$ gives the number $|B\cap A(v_i)|$ of candidates in the committee approved by voter $v_i$.
\item $s_1,\dots,s_n$ take non-negative real numbers, where $s_i$ 
gives the contribution $f(|B\cap A(v_i)|,|A(v_i)|)$ of voter $v_i$ to the committee. 
\end{itemize}

\subsection{Winning Committee without Constraints}
To find a winning committee without constraints, we formulate a MIP similar to the formulation proposed by \citet{DBLP:conf/ijcai/DudyczMMS20}.

\def\vskp{\vskip 1em}
\vskp\hrule
{
\begin{align}\label{eq:max} 
\mbox{Maximize\; } &\sum_{i=1}^n s_i \mbox{\; subject to:}\\
\forall v_i\in V: \quad & \!\!\!\!\sum_{c_j\in A(v_i)}\!\!\!\!\!z_j = u_i \notag
\quad\quad\vrule\quad\quad  \sum_{j=1}^m z_j=k \notag
\\
\forall v_i\in V,k'\in[k]: \quad & s_i\leq|k'-u_i|\cdot M+f(k',|A(v_i)|)\label{eq:absolute}
\end{align}
}
\hrule\vskp

Here, $M$ is a number greater than $f(k,|A(v)|)$ for all $v\in V$, and $[k]\defeq\set{0,\dots,k}$. We use a standard technique to eliminate the absolute value $|k'-u_i|$ from our MIP. 
Let $t=k'-y_i$. Note that $|t|<k+1$. 
Let $b\in\{0,1\}$, where $b=1$  denotes that $z\ge0$ (and $b=0$ denote $z\le0$), and let
$t^+,t^-\in\mathbb{Z}_{\ge 0}$. We add the constraint  $t=t^+-t^-$. We also add the constraint 
$0\leq t^+\leq b\cdot(k+1)$ to assure that $t^+=t$ if $b=1$; otherwise $t^+=0$. We use a symmetrical constraint $0\leq t^-\leq (1-b)\cdot(k+1)$ so that $t^-=t$ if $b=0$; otherwise $t^-=0$.
Finally, we replace $|k'-u_{i}|$ with $t^+ + t^-$.

\subsection{Incorporating TGDs}
Next, we add the TGDs of $\Gamma$ to the MIP. Consider a TGD $\tau\in\Gamma$.   
Since $\tau$ is a TGD over the extended schema $\scscom$, it is of the form
\begin{align*}
\forall \vec x \Big[ &
\big(\varphi'(\vec x)\land 
\comrel(x'_1)\land\ldots\land\comrel(x'_q)\big)\rightarrow\\
&\exists \vec y \big[
\psi'(\vec x,\vec y)\land 
\comrel(y'_1)\land\ldots\land\comrel(y'_\ell)
\big]\Big],
\end{align*}
where $x'_1,\dots,x'_q$ are distinct variables in $\vec x$, and $y'_1,\dots,y'_\ell$ are distinct variables in $\vec x$ and $\vec y$, and the expressions $\varphi'$ and $\psi'$ are conjunctions of relational atoms from $\scs$ (hence, they do not include the $\comrel$ relation).

Let $\mathcal{A}_\tau$ be the set of all 
assignments $\alpha$ of values from $D$ to the variables of $\vec x$ so that $\varphi'(\alpha(\vec x))$ holds in $D$, where $\alpha(\vec x)$ is the tuple obtained from $\vec x$ by replacing every variable $x$ with the value $\alpha(x)$. 
For $\alpha\in\mathcal{A}_\tau$, let $B[\alpha]\defeq\set{\alpha(x'_j)\mid j=1,\dots,q}$. For illustration, if $\tau$ is the first TGD of \Cref{ex:tgd}, then 
$\mathcal{A}_\tau$ will include the assignment $\alpha=\set{t\mapsto\val{ML}}$ since $\rel{Topic}(\val{ML})$ is true in the database of \Cref{fig:confrence_committee_db_example};
here, $B[\alpha]$ is empty since the TGD has no occurrences of $\comrel$ (but it would not be empty for the last TGD of  \Cref{ex:tgd}). 
Note that the premise of $\tau$ holds for $\alpha$ if and only if $B[\alpha]$ consists of only candidates, and each of them is in the committee. If $B[\alpha]$ includes a value that is not a candidate, we ignore $\alpha$. Hence, assume that 
$B[\alpha]\subseteq C$. We add to the MIP a variable $b_\alpha\in\set{0,1}$ that is intended to take value $1$ if and only if $B[\alpha]$ is a subset of the committee. 

\def\ext{\mathsf{ext}}

For $\alpha\in\mathcal{A}_\tau$,  let $\ext(\alpha)$ be the set of all extensions $\beta$ of $\alpha$ to an assignment (of values from $D$) to the variables of $\vec x$  and $\vec y$ such that $\psi'(\beta(\vec x),\beta(\vec y))$ holds in $D$.  Let $B[\beta]\defeq\set{\beta(y'_j)\mid j=1,\dots,\ell}$. For illustration, in our example  the assignment $$\beta=\set{t\mapsto\val{ML},p\mapsto\val{p1},c\mapsto\val{Ann}}$$ is in $\ext(\alpha)$, and $B[\beta]=\set{\val{Ann}}$. We add a variable $b_\beta \in\set{0,1}$ such that $b_\beta$ can be $1$ only if the conclusion of the TGD holds, that is,
$B[\beta]$ is a subset of the committee.

The following constraints ensure the correctness of the $b_\alpha$ and $b_\beta$, and also that every $\alpha$ has an extension $\beta$ that satisfies the conclusion of the TGD (hence the TGD holds).
\vskp\hrule
{
\begin{align*}
&\forall \alpha\in\mathcal{A}_\tau: \quad 
|B[\alpha]|\cdot b_\alpha 
\;\leq\;\!\!\!\!
\sum_{c_j\in B[\alpha]}\!\!\!\! z_j 
\;
\leq
\;
|B[\alpha]|+
b_\alpha-1 \\
&\forall \alpha\in\mathcal{A}_\tau, \beta\in\ext(\alpha): \quad 
|B[\beta]|\cdot b_\beta 
\;\leq\;
\sum_{c_j\in B[\beta]}\!\!\!\! z_j  \\
&\forall \alpha\in\mathcal{A}_\tau:\quad 
b_\alpha  
\;\leq\;
\sum_{\beta\in \ext(\alpha)}\!\!\!\! {b_\beta}
\end{align*}
}
\hrule\vskp

\subsection{Incorporating DCs}
Finally, we add the DCs of $\Gamma$ to the MIP. If $\delta$ is such a DC, then it is of the form 
\begin{align}\label{eq:delta}
\forall\vec{x}
[\neg(\varphi'(\vec x)\land\psi'(\vec x)\land \comrel(x'_1)\land\ldots\land\comrel(x'_q))],
\end{align}
where  $x'_1,\dots,x'_q$ are distinct variables in $\vec x$, and  the expressions $\varphi'$ and $\psi'$ are conjunctions of relational atoms from $\scs$.
Let $\mathcal{A}_\delta$ be the set of all assignments $\alpha$ such that 
$\varphi'(\vec x)\land\psi'(\vec x)$ holds in $D$. For $\alpha\in\mathcal{A}_\delta$, let $B[\alpha]=\set{\alpha(x'_j)\mid j=1,\dots,q}$. To enforce the satisfaction of the DC, we need to ensure that the committee does not contain any $B[\alpha]$. Hence, we add the following constraints to the MIP:
\vskp\hrule
{
\begin{align}\label{eq:dc-mip}
\forall \alpha\in\mathcal{A}_\delta: \quad 
\sum_{c_j\in B[\alpha]}\!\!\!\! z_j 
\;<\;
|B[\alpha]|
\end{align}
}
\hrule\vskp

\subsection{Optimizations}
We described a MIP for finding a winning committee under TGDs and DCs. As expected, the MIP may have a high execution cost due to the large number of variables and constraints. Next, we describe several optimizations of the MIP that we found highly beneficial in our experiments. These optimizations decrease the size of the program and have a twofold benefit: they reduce both the \emph{construction time}  and the \emph{solving time} of the MIP. We study the benefit of these optimizations in the empirical study of the next section.

\paragraph{Grouping similar voters.}
Voters may have identical approval profiles, especially if the number of candidates is relatively small. We group together candidates with the same profile and treat each group as a single \emph{weighted} voter $(v'_i,\mu'_i)$, where $\mu'_i$ is the number of voters $v$ grouped together for having the same approval $A(v)$, which becomes $A(v'_i)$. This means that $V$ consists now of the $v'_i$, and the program remains unchanged except for line~\eqref{eq:max} where we replace the objective $\sum_{i=1}^{n} s_i$ with $\sum_{i=1}^{n'} \mu_i\cdot s_i$.

\paragraph{Pruning infeasible scores.}
Recall the constraint~\eqref{eq:absolute} of the MIP. This constraint is applied to every $k'=0,\dots,k$ to force that $s_i$ takes the value $f(|B\cap A(v_i)|,|A(v_i)|)$. This rule gives a nontrivial inequality only when $k'=|B\cap A(v_i)|$. In particular, the rule is meaningless if $k'>|A(v_i)|$, and can be skipped. Hence, in the constraint~\eqref{eq:absolute} we replace $k$ with $\min(k, |A(v_i)|)$.
This optimization targets voters $v_i$ with $|A(v_i)|<k$, which we encountered frequently.

\paragraph{Contracting DC constraints via hypercliques.}
This optimization concerns the constraints for enforcing the DCs. For simplicity sake, suppose first that $q=2$ in the definition of $\delta$ in~\eqref{eq:dc-mip}. 
Define the \emph{conflict graph} of $\delta$ to be the undirected graph $G$ where $C$ is the set of nodes, and there is an edge between every pair of candidates that violates $\delta$. Then, the process of adding constraints can be viewed as iteratively eliminating an edge $e$ of $G$ by adding the constraint in \eqref{eq:dc-mip} for $e=B[\alpha]$, until no edges remain. The insight we use is that if $U$ is a clique of $G$, then we can eliminate all of the edges between vertices of $U$ with a single constraint:
$ \sum_{c_j\in U} < 2$.\footnote{This is assuming that there are no self-loops, which are easy to handle: simply ignore every candidate that has a self-loop.} With that, we contract multiple constraints into a single one. Hence, we can reduce the total number of constraints by iteratively eliminating maximal cliques $U$, instead of individual edges $e$, until no edges are left in $G$.
A common example where this optimization is useful is the case that $\delta$ forbids two candidates from the same party under some party definition (e.g., the parties of authors of publications on the same topic in  \Cref{fig:confrence_committee_db_example}); in this case, each party forms a clique of the conflict graph.

Any algorithm for finding (maximal) cliques can be used for this optimization. We used a simple greedy approach in our experiments, but one can deploy here efficient algorithms that guarantee a coverage of the edges by the minimum number of cliques~\cite{hevia2023solvingedgecliquecover}.

Beyond $q=2$, we use the known concept of a \emph{conflict hypergraph} instead of a conflict graph, where the hyperedges are the conflicting candidate sets. Assume that each conflicting set $B[\alpha]$ consists of $q$ candidates, that is, no equality between candidates is allowed. Then the conflict hypergraph is $q$-uniform 
and we contract a set of MIP constraints by eliminating a \emph{hyperclique}: a set $U$ of vertices where every subset of size $q$ is a hyperedge. 
The constraint is then $ \sum_{c_j\in U} < q$.

\begin{figure*}[th]
\newlength{\wdt}
\setlength{\wdt}{0.238\textwidth}
\small
\centering
    Glasgow City Council
    \vskip-0.1em
    \subfigure[]{\label{fig:opt:const:glasgow}   
        \includegraphics[width=\wdt]{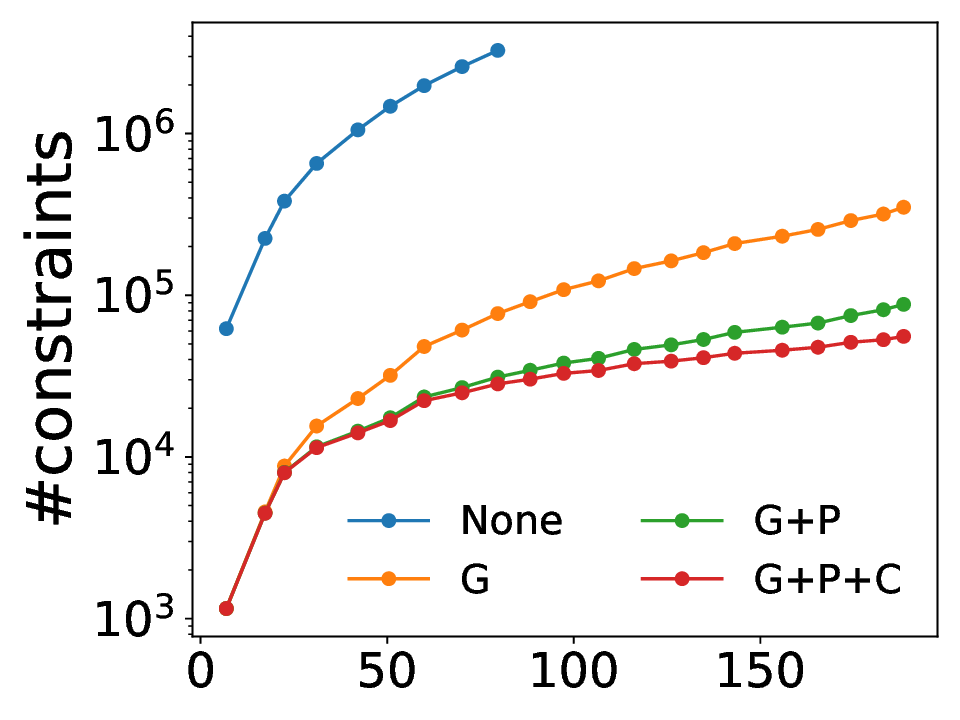}}
    \subfigure[]{\label{fig:opt:time:glasgow} 
        \includegraphics[width=\wdt]{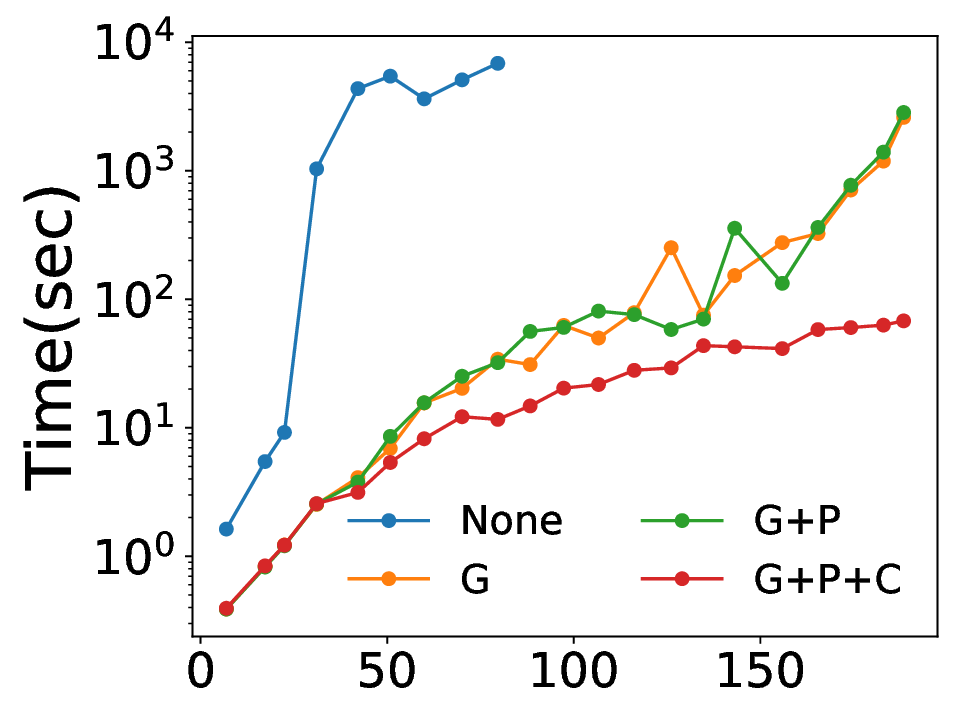}}
    \subfigure[]{\label{fig:dctgds:glasgow} 
        \includegraphics[width=\wdt]{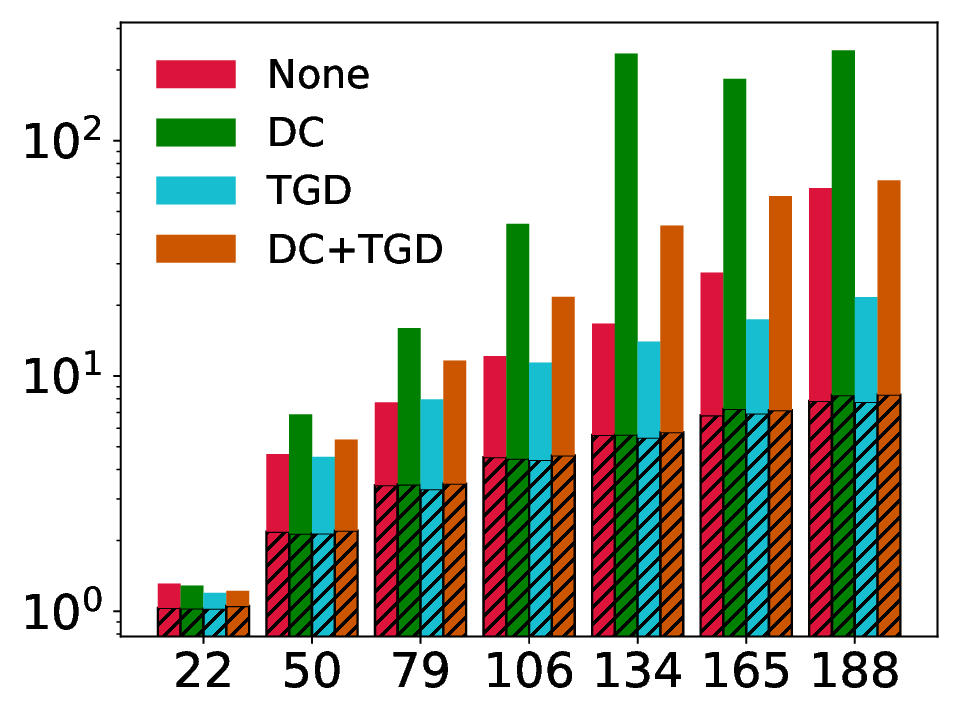}}
    \subfigure[]{\label{fig:rules:glasgow} 
        \includegraphics[width=\wdt]{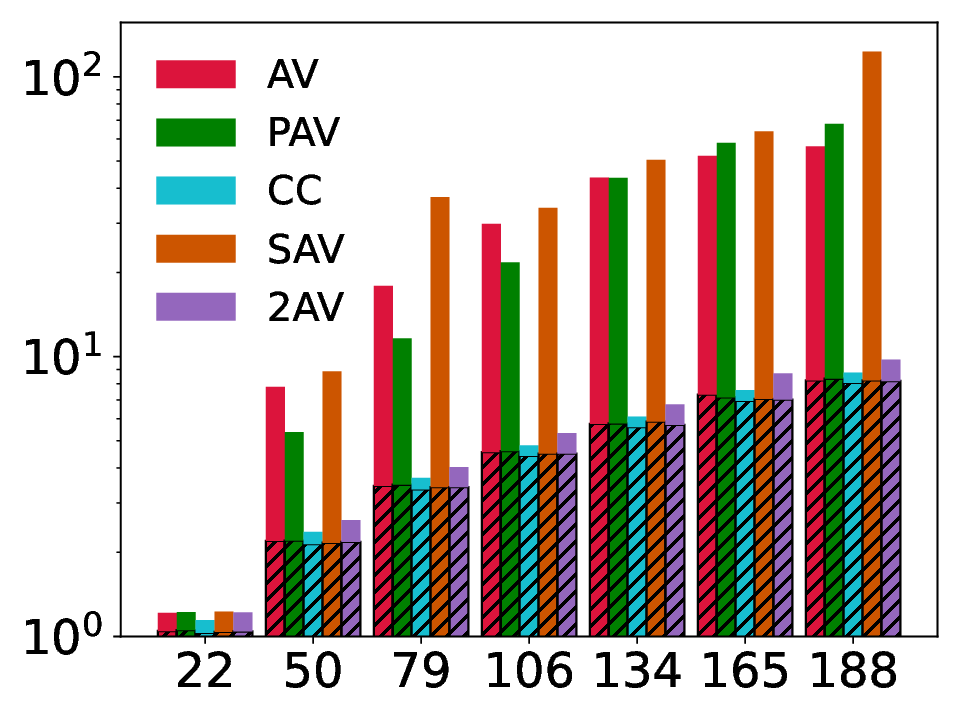}}\\
    Trip Advisor
    \vskip-0.1em
    \subfigure[]{\label{fig:opt:const:trip}
        \includegraphics[width=\wdt]{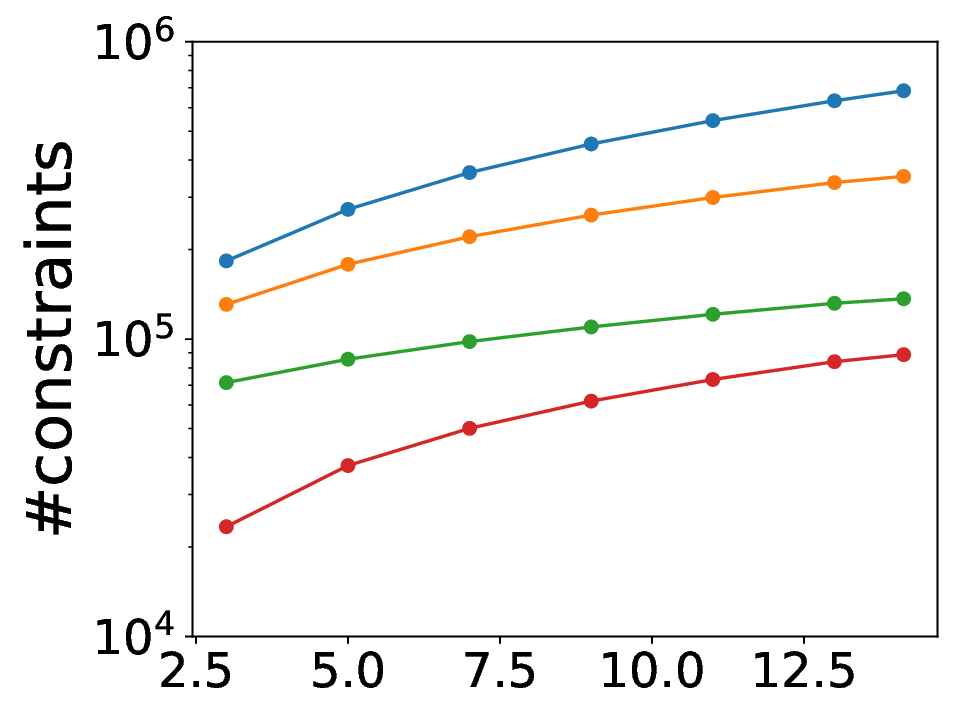}
       }\!\!\!
   \subfigure[]{\label{fig:opt:time:trip}
        \includegraphics[width=\wdt]{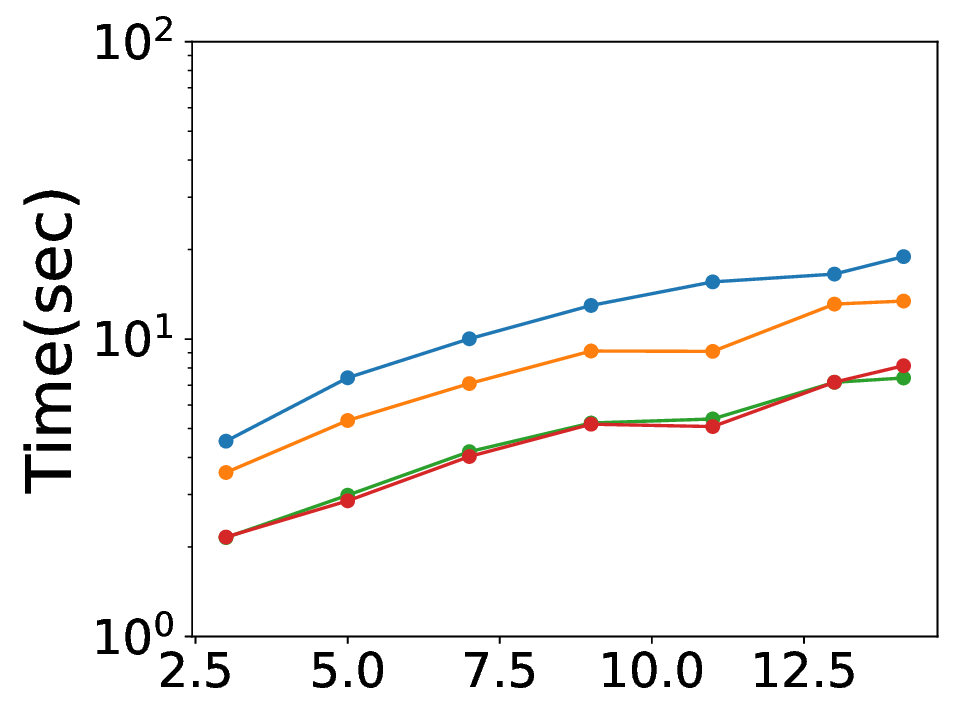}
    }\!
    \subfigure[]{\label{fig:dctgds:trip} 
        \includegraphics[width=\wdt]{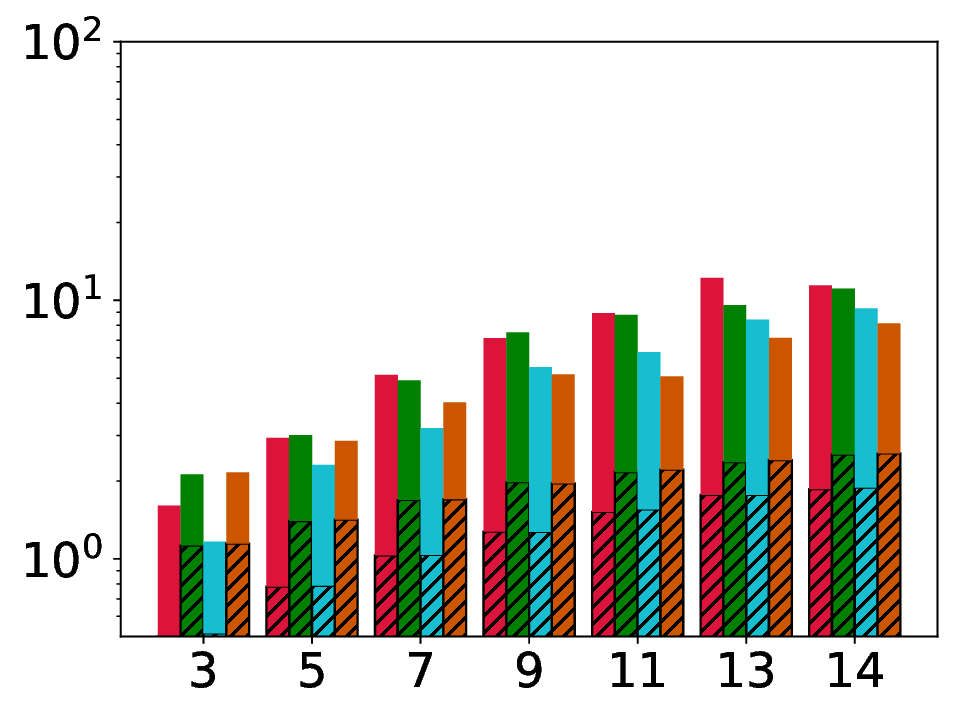}
    }\!\!  
    \subfigure[]{\label{fig:rules:trip} 
        \includegraphics[width=\wdt]{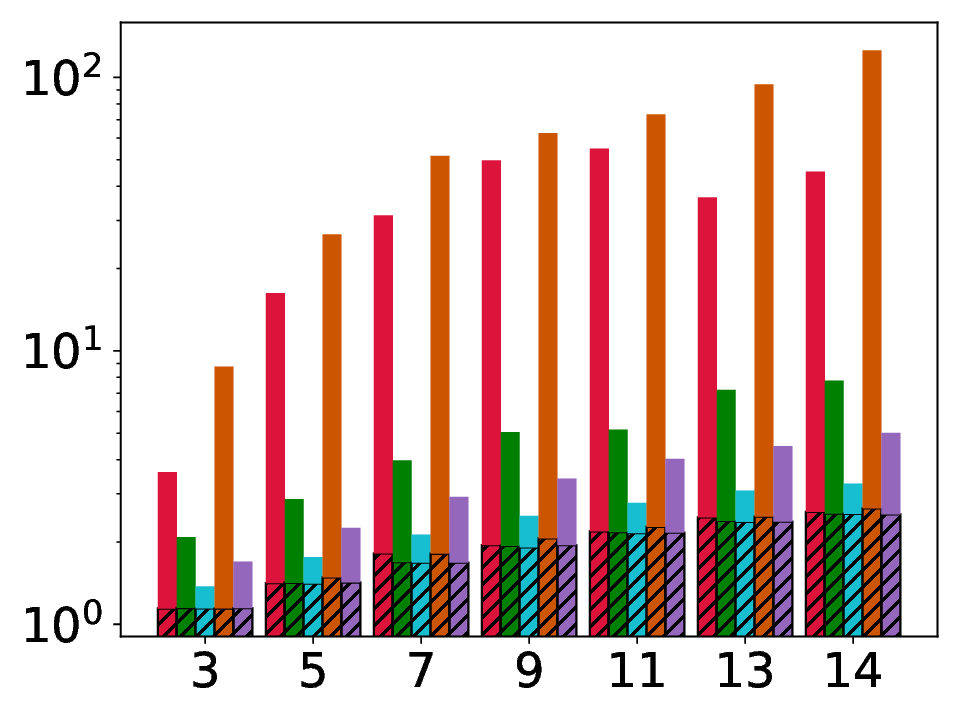}
    }\\
    Movies Dataset
    \vskip-0.1em
    \subfigure[]{\label{fig:opt:const:movies}
        \includegraphics[width=\wdt]{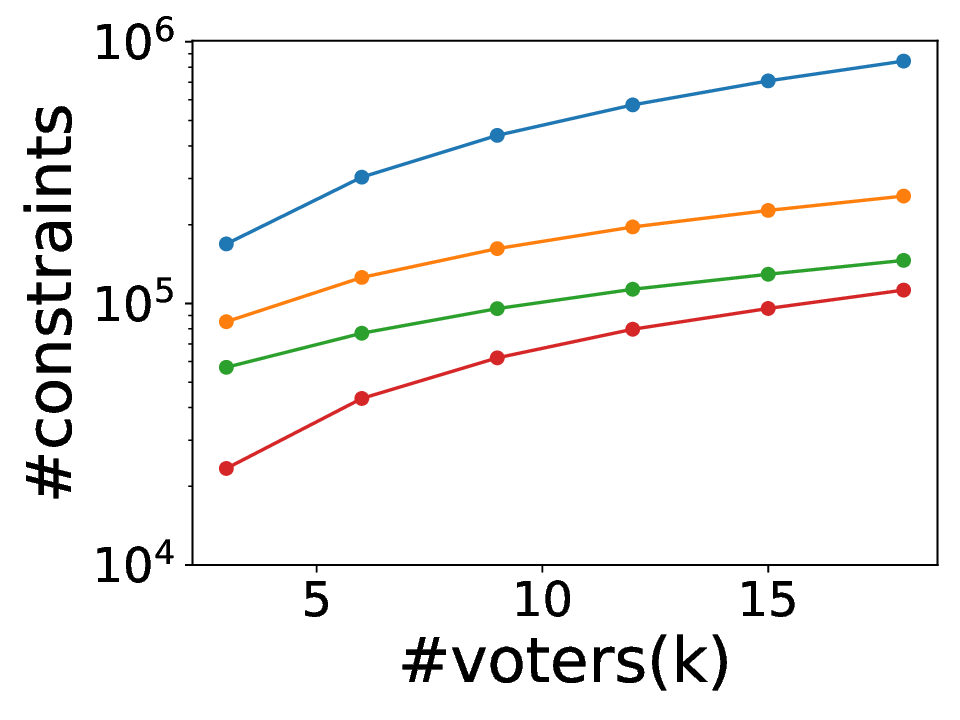}
     }\!\!\!
    \subfigure[]{\label{fig:opt:time:movies}
        \includegraphics[width=\wdt]{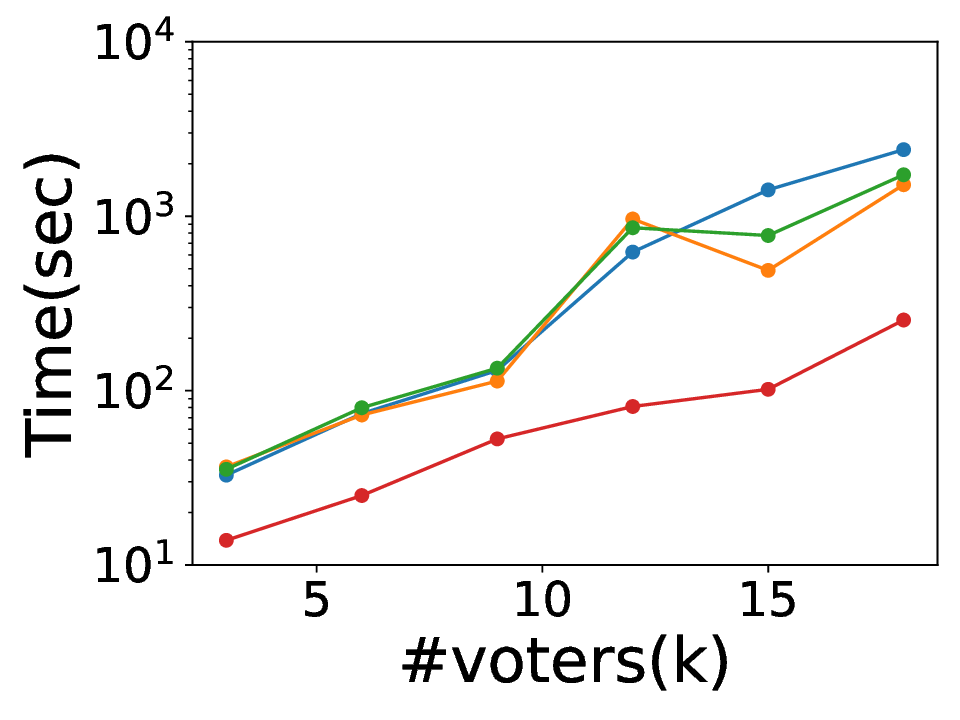}
     }\!\!
    \subfigure[]{\label{fig:dctgds:movies} 
        \centering
        \includegraphics[width=\wdt]{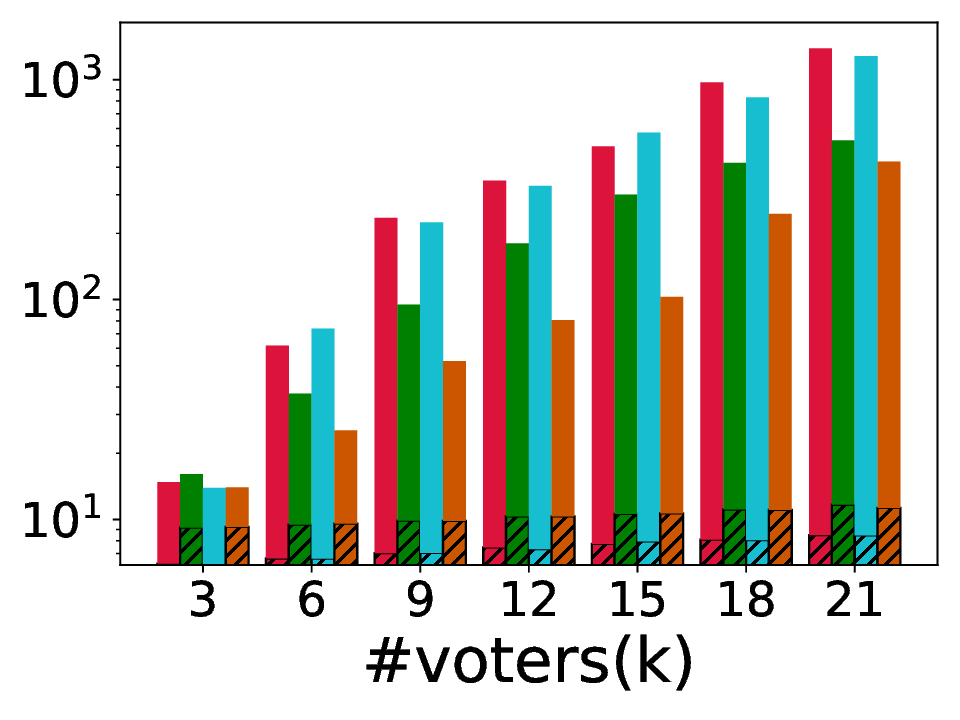}
         }\!\!\!
    \subfigure[]{\label{fig:rules:movies} 
        \centering
        \includegraphics[width=\wdt]{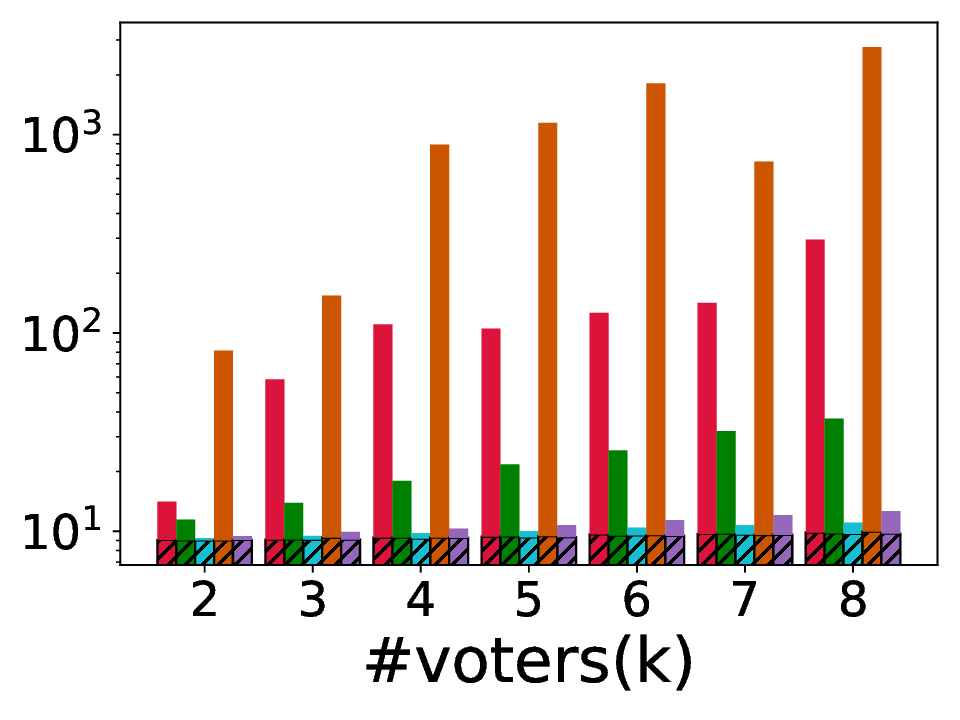}
        }
    \caption{From left to right: impact of the optimizations on the number of MIP constraints (first) and runtime (second), impact of varying constraints (third) and voting rules (fourth) on the runtime. G, P, and C refer to the optimizations:
    Grouping similar voters, Pruning infeasible scores, and Contracting DC constraints.
    }
    \label{fig:main}
\end{figure*}

\section{Experimental Evaluation}
We now describe our implementation of the framework and report on the experiments conducted. The goal is to explore the feasibility of the framework via the MIP implementation over realistic data, as well as the effect of the optimizations and of the various 
parameters of the problem.

The implementation\footnote{Code available at \url{https://github.com/Roi-Yona/abcc.git}.}
is programmed in Python3 with SQLite3 as the database engine and Gurobi~\cite{gurobi} as the MIP solver.\footnote{We experimented with several other solvers and found the performance best for Gurobi~11.0.1.}
All experiments were conducted on a machine with 512GB RAM and
64 Intel(R) Xeon(R) Gold 6130 2.10GHz CPUs with 16 cores running Ubuntu 20.04.6 LTS. Due to the number of measurements, each number describes a single run. Note that we consistently use a logarithmic scale on the y-axis.

\subsection{Datasets and Problem Instances} 
We used datasets from three domains: political elections, hotels, and movies. Next, we describe the approval profile, external contexts, and integrity constraints. For each dataset, we imposed one DC and one TGD, 
which we describe here in natural language; formal phrasings are in the Appendix.

\newcommand{\mypara}[1]{\medskip\par\noindent\underline{#1}.}

\mypara{Glasgow City Council elections (2007)}\footnote{\url{https://preflib.simonrey.fr/dataset/00008}}\, This dataset has the results of the 2007 elections of the council of Glasgow, separated by wards (divisions). There are 21 wards with pairwise-disjoint groups of voters and candidates. Each candidate group consists of 8 to 13 people, summing up to 208 candidates. Each ward consists of 5,199 to 12,744 voters, with a total of 188,376 voters. To obtain an ABC instance, we take the union of the wards and establish one set $C$ of candidates and one set $V$ of voters. Each voter $v$ ranks the candidates, and we selected the top three candidates as the approval set $A(v)$. 
As external context, the database $D$ has the following relations: $\rel{Ward}(c,w)$ asserts that candidate $c$ is associated with ward $w$, and $\rel{Party}(c,p)$, from Wikipedia, asserts that  $c$ belongs to party $p$. 
The goal is to elect a committee of candidates. The TGD states that there is at least one committee member from each ward (which is enforced in practice by considering each ward independently). The DC states that no three members belong to the same party.

\mypara{Trip Advisor}\footnote{\url{https://preflib.simonrey.fr/dataset/00040}}\, This dataset contains reviews of 1,851 hotels across the world, scraped from Trip Advisor. Each user ranks hotels on a scale of 1 to 5 (best). The candidate set $C$ consists of the hotels, the voters are the users who ranked more than one hotel, and the approval set $A(v)$ consists of the hotels that $v$ ranked 5. There are 14,137 voters with a nonempty approval set. 
The database $D$ has two relations. 
$\rel{Location}(c,t,p)$ specifies, for each hotel $c$, the city $t$ and country $p$ of $c$. The
relation $\rel{Price}(c,r)$ gives a price range $r$ for each hotel $c$: \val{low}, \val{mid} or \val{high} (derived from the $1/3$ and $2/3$ quantiles). The goal is to select a set of hotels (e.g., to pursue special-price contracts). The DC states that no two hotels have the same city, country, and price range.
The TGD states that there is at least one low-price hotel in each selected location, for selected city-country combinations. For that, we added a relation $\rel{Selected}(t,p)$ with 6 locations. 
  
\mypara{Kaggle's Movies Dataset}\footnote{\footnotesize\url{www.kaggle.com/datasets/rounakbanik/the-movies-dataset}} This dataset integrates data from TMDB and GroupLens, and has movies and user ratings. Ratings are fractional numbers between 1 and 5. The candidate set $C$ consists of 100 movies (first in the dataset list), voters are the reviewing users, and the approval set $A(v)$ of each voter $v$ contains the movies that $v$ ranked above 4. There are 107,733 voters with a nonempty approval set.
The database $D$ has two relations. $\rel{MovieGenre}(c,g)$ specifies, for each movie $c$, the genres $g$ of $c$.  
$\rel{Language}(c,l)$ gives the original language $l$ of each movie $c$. 
The goal is to select a set of movies (say, to show at a social convention). The DC states that no three movies have the same genre. The TGD states that there is at least one movie in English, French, and Spanish. 
\Cref{table:movies} illustrates how different constraints may yield different winning committee, when $k=5$.

\begin{table*}[t]
\caption{Winning committees in the Movies Dataset}
\label{table:movies}
\renewcommand{\arraystretch}{1.0}
\centering
\scalebox{1.0}{
\begin{tabular}{l|l|l|l}
\hline
\textbf{No constraints}   & \textbf{DC}  & \textbf{TGD}  & \textbf{DC+TGD} \\\hline
Judgment Night  & Judgment Night  & Judgment Night & Judgment Night  \\
The Dark & {\color{blue}Endless Summer} & The Dark & {\color{blue}Unforgiven} \\
2001: A Space Odyssey& {\color{blue}Back to the Future} & {\color{blue}Bad Education} &  {\color{blue}Land Without Bread} \\
3 Colours: Red& 3 Colours: Red & 3 Colours: Red & 3 Colours: Red\\
Scarface & {\color{blue}Dracula} &  Scarface & {\color{blue}Back to the Future} \\ \hline
\end{tabular}}
\end{table*}

\subsection{Experiments and Results}
The experimental results are depicted in Figures~\ref{fig:main} and \ref{fig:commitee_size}. We discuss each experiment and the corresponding results. 

The default configuration is as follows.
\begin{itemize}
\item The scoring rule $f$ is PAV. 
\item The set $\Gamma$ of constraints includes the TGD and the DC relevant to each dataset.
\item The MIP construction uses all three optimizations.
\item The committee size $k$ is $10$, except for Glasgow where $k$ is the number of wards; we partitioned the 21 wards into 7 groups of 3, and each tick (x-axis category) adds a group to the experiment (see Figures~\ref{fig:dctgds:glasgow} and~\ref{fig:rules:glasgow}).
\end{itemize}

\paragraph{Effect of the optimizations.}\label{effect_of_the_optimizations_paragraph}
Figures~\ref{fig:opt:const:glasgow},  \ref{fig:opt:const:trip} and \ref{fig:opt:const:movies} show the number of MIP constraints under the different optimizations we described in the previous section: Grouping similar voters (G), Pruning infeasible scores (P), and Contracting DC constraints (C). The horizontal axis is the number of voters. We can see that each of the optimizations reduces the MIP, while the combination of the three gives a reduction of 90\% for Trip Advisor (87.04\%) and Movies (86.81\%). In the case of Glasgow, the unoptimized implementation
reached a timeout before 90,000 voters.
In the Appendix, we give a corresponding chart for the number of variables in the MIP. The trends are similar (except for contraction that does not change the number of variables).

Figures~\ref{fig:opt:time:glasgow},  \ref{fig:opt:time:trip} and \ref{fig:opt:time:movies} show corresponding experiments, except that now we measure the total running time instead of the number of constraints. We can see that the behavior is quite similar in the total 
effect compared to the vanilla implementation, but the 
effect
of each individual optimization varies. For example, the addition of Pruning has little 
effect
(if any) in the Glasgow and the Movies datasets, but Contraction is consistently accelerating the computation by an order of magnitude. On the other hand, Contraction has less 
effect in the case of Trip Advisor. 

\paragraph{Effect of the database constraints.}
Recall that we have one DC and one TGD for each dataset. In the experiments depicted in Figures~\ref{fig:dctgds:glasgow}, \ref{fig:dctgds:trip} and \ref{fig:dctgds:movies}, we applied our solution to find a winning committee under four configurations: (1) no constraints, (2) only the DC, (3) only the TGD, and (4) both the DC and the TGD. Each bar shows the time spent on the MIP construction (bottom, shaded by line patterns) and the MIP solving (top). The x-axis is the number of voters. 

From these experiments, we draw several insights. First, the database constraints have no significant effect on the running time. An exception is Glasgow, where the DC leads to a slowdown of up to an order of magnitude ($14.04\times$). Interestingly, once the TGD is added to the DC, the running time drops down to the constraint-free configuration. Hence, the addition of constraints can lead to a reduction in the solver's running time. Second, the construction time takes around $6\%$ to $1/3$ of the total computation in the case of Glasgow and Trip Advisor. (Note the logarithmic scale.) 
In the Movies Dataset, the construction time is negligible.

\begin{figure}[t] 
    \centering
\subfigure[Trip Advisor ($|C|=1,845$)]{\includegraphics[width=0.22\textwidth]{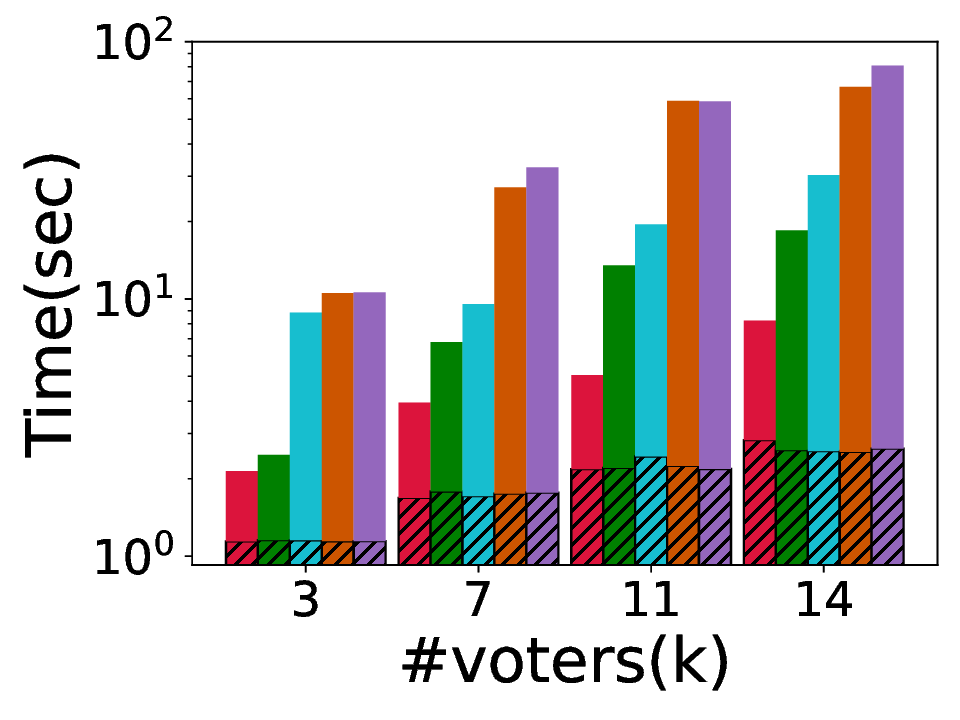}       \label{fig:trip_advisor_different_committee_size}
        }
     \subfigure[Movies ($|C|=100$)]{
        \includegraphics[width=0.22\textwidth]{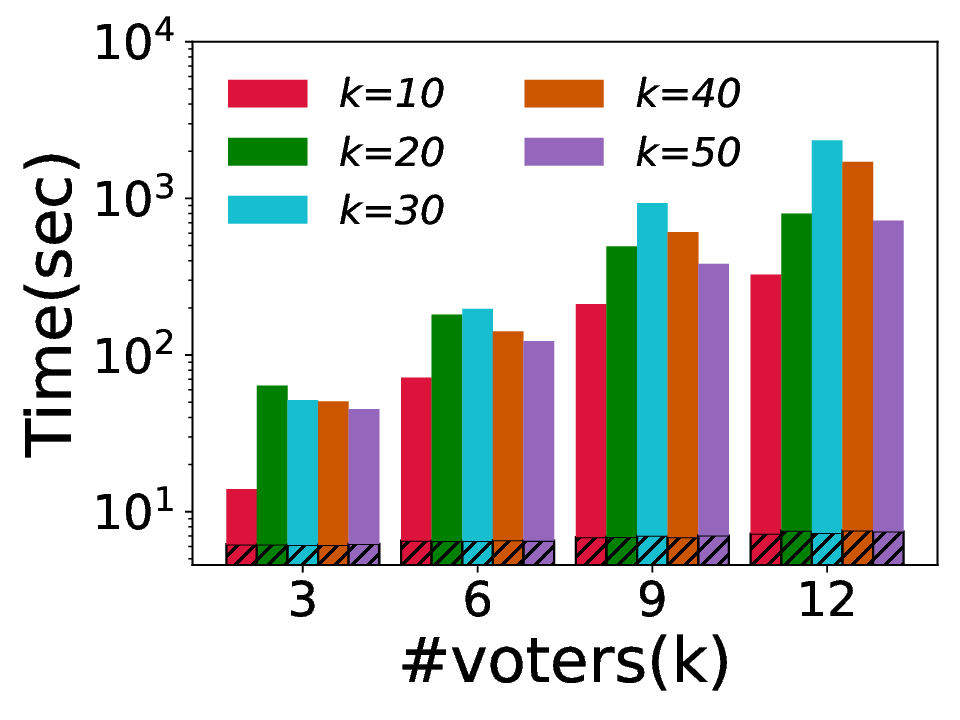}       
        \label{fig:movies_different_committee_size}
        }
        \vskip-1em
    \caption{Computing time for varying committee size $k$.}
    \vskip-1em
    \label{fig:commitee_size}
\end{figure}

\paragraph{Effect of the scoring rule.}
In this experiment, we varied the scoring rule. Figures~\ref{fig:rules:glasgow}, \ref{fig:rules:trip} and \ref{fig:dctgds:movies} show the results. We experimented with five scoring rules: (1) Approval Voting (AV), (2) Proportional Approval Voting (PAV), (3) Chamberlin-Courant (CC), (4) Satisfaction Approval
Voting (SAV), and (5) 2-truncated Approval Voting (2AV). The first four are defined in the Preliminaries. 2AV is similar to AV, except that a candidate can contribute at most two to the committee: $w(x)=\min(2,x)$.
Note the construction time is not affected by the scoring rule (as expected). Interestingly, the computation time of CC and 2AV is consistently faster than the rest, typically by 5 times (e.g., Glasgow) to 100 times (e.g., Movie Dataset). It appears that the MIP solver utilizes the fact that $f(k',|A(v_i)|)$ is the same for all $k'\geq 2$.

\paragraph{Effect of the committee size.}
\Cref{fig:commitee_size} shows the results where we varied the size $k$ of the committee. We did so only on Trip Advisor and Movies Dataset since the committee size in Glasgow is the number of wards. In the Movies Dataset, we excluded the DC to allow for the existence of a feasible committee. In Trip Advisor, the running time can be greatly affected by $k$, while the difference is smaller between $k=40$ and $k=50$. In the Movies Dataset, there is a consistent peak between $k=20$ and $k=30$. This behavior may be due to the ratio between $k$ and $|C|$, but not due to the type of constraints, as we observed similar phenomena with different combinations of constraints (see the Appendix).

\subsection{Discussion}
We learn several lessons from the experiments. First, the combined effect of the three optimizations is significant, even though each optimization may contribute differently in different domains. Second, the database constraints incur a cost. However, if we start with one DC, adding a TGD does not necessarily increase the execution cost; in fact, the cost usually drops. In the Appendix, we show that this holds even when adding two more TGDs.
Third, the MIP solver is highly sensitive to the scoring rule, while the important factor for performance seems to be the number of different score values per voter. 
Finally, the size of the committee has a significant effect on the running time; yet, with constraints the behavior is not monotonic as one might expect.

\section{Concluding Remarks}
We presented a framework that extends ABC voting with a relational database  that provides contextual information about the candidates. The framework enables phrasing integrity constraints on the desired committee, with respect to the context, by deploying the classical concept of database dependencies. We focused on database dependencies in the form of TGDs and DCs. We presented a MIP implementation of the framework and devised optimization techniques.  Finally, we conducted an empirical study that shows the effectiveness of the optimized implementation. 

This work opens many directions for future research. For one, we would like to extend the set of constraints to ones that involve counting and other aggregations~\cite{DBLP:journals/tcs/RossSSS98,DBLP:journals/tods/FlescaFP10}, to reason about quantitative requirements about the sought committee. It would also be interesting to study the combination of our framework with proportionality in representation~\cite{DBLP:conf/aaai/AzizBCEFW15,DBLP:conf/aaai/AzizBCEFW15,DBLP:conf/aaai/0001EHLFS18,DBLP:conf/sigecom/Brill023}. In terms of the implementation strategy, we plan to study whether we can establish a comparable or more effective solution by deploying solver paradigms different from MIP, particularly Max-Sat solvers. 

Finally, we note that our framework treats the contextual database as mainly providing context about the candidates, while the voter profile is given as in the traditional ABC setting. Previous studies extended ABC by allowing the voters to phrase conditions on committees rather than mere approval sets~\cite{DBLP:conf/ijcai/BarrotL16, Boutilier_2004,ijcai2023p282, masařík2023generalisedtheoryproportionalitycollective,mavrov2023fairmultiwinnerelectionsallocation}. Accordingly, a valuable direction for future work is to complement past formalisms by extending our framework with the voter contexts, allowing voters to phrase conditions with respect to the contextual database (e.g., as DCs and TGDs over the extended schema).

\section*{Acknowledgments}
The authors are deeply grateful to Phokion Kolaitis for fruitful discussions and substantial suggestions for this work.

\bibliographystyle{ACM-Reference-Format}

\begin{thebibliography}{42}


\ifx \showCODEN    \undefined \def \showCODEN     #1{\unskip}     \fi
\ifx \showDOI      \undefined \def \showDOI       #1{#1}\fi
\ifx \showISBNx    \undefined \def \showISBNx     #1{\unskip}     \fi
\ifx \showISBNxiii \undefined \def \showISBNxiii  #1{\unskip}     \fi
\ifx \showISSN     \undefined \def \showISSN      #1{\unskip}     \fi
\ifx \showLCCN     \undefined \def \showLCCN      #1{\unskip}     \fi
\ifx \shownote     \undefined \def \shownote      #1{#1}          \fi
\ifx \showarticletitle \undefined \def \showarticletitle #1{#1}   \fi
\ifx \showURL      \undefined \def \showURL       {\relax}        \fi
\providecommand\bibfield[2]{#2}
\providecommand\bibinfo[2]{#2}
\providecommand\natexlab[1]{#1}
\providecommand\showeprint[2][]{arXiv:#2}

\bibitem[Ahuja et~al\mbox{.}(1993)]%
        {DBLP:books/daglib/0069809}
\bibfield{author}{\bibinfo{person}{Ravindra~K. Ahuja}, \bibinfo{person}{Thomas~L. Magnanti}, {and} \bibinfo{person}{James~B. Orlin}.} \bibinfo{year}{1993}\natexlab{}.
\newblock \bibinfo{booktitle}{\emph{Network flows - theory, algorithms and applications}}.
\newblock \bibinfo{publisher}{Prentice Hall}.
\newblock


\bibitem[Aziz et~al\mbox{.}(2015a)]%
        {DBLP:conf/aaai/AzizBCEFW15}
\bibfield{author}{\bibinfo{person}{Haris Aziz}, \bibinfo{person}{Markus Brill}, \bibinfo{person}{Vincent Conitzer}, \bibinfo{person}{Edith Elkind}, \bibinfo{person}{Rupert Freeman}, {and} \bibinfo{person}{Toby Walsh}.} \bibinfo{year}{2015}\natexlab{a}.
\newblock \showarticletitle{Justified Representation in Approval-Based Committee Voting}. In \bibinfo{booktitle}{\emph{{AAAI}}}. \bibinfo{publisher}{{AAAI} Press}, \bibinfo{pages}{784--790}.
\newblock


\bibitem[Aziz et~al\mbox{.}(2018)]%
        {DBLP:conf/aaai/0001EHLFS18}
\bibfield{author}{\bibinfo{person}{Haris Aziz}, \bibinfo{person}{Edith Elkind}, \bibinfo{person}{Shenwei Huang}, \bibinfo{person}{Martin Lackner}, \bibinfo{person}{Luis~S{\'{a}}nchez Fern{\'{a}}ndez}, {and} \bibinfo{person}{Piotr Skowron}.} \bibinfo{year}{2018}\natexlab{}.
\newblock \showarticletitle{On the Complexity of Extended and Proportional Justified Representation}. In \bibinfo{booktitle}{\emph{{AAAI}}}. \bibinfo{publisher}{{AAAI} Press}, \bibinfo{pages}{902--909}.
\newblock


\bibitem[Aziz et~al\mbox{.}(2015b)]%
        {DBLP:conf/atal/AzizGGMMW15}
\bibfield{author}{\bibinfo{person}{Haris Aziz}, \bibinfo{person}{Serge Gaspers}, \bibinfo{person}{Joachim Gudmundsson}, \bibinfo{person}{Simon Mackenzie}, \bibinfo{person}{Nicholas Mattei}, {and} \bibinfo{person}{Toby Walsh}.} \bibinfo{year}{2015}\natexlab{b}.
\newblock \showarticletitle{Computational Aspects of Multi-Winner Approval Voting}. In \bibinfo{booktitle}{\emph{{AAMAS}}}. \bibinfo{publisher}{{ACM}}, \bibinfo{pages}{107--115}.
\newblock


\bibitem[Barrot and Lang(2016)]%
        {DBLP:conf/ijcai/BarrotL16}
\bibfield{author}{\bibinfo{person}{Nathana{\"{e}}l Barrot} {and} \bibinfo{person}{J{\'{e}}r{\^{o}}me Lang}.} \bibinfo{year}{2016}\natexlab{}.
\newblock \showarticletitle{Conditional and Sequential Approval Voting on Combinatorial Domains}. In \bibinfo{booktitle}{\emph{{IJCAI}}}. \bibinfo{publisher}{{IJCAI/AAAI} Press}, \bibinfo{pages}{88--94}.
\newblock


\bibitem[Beeri and Vardi(1984)]%
        {DBLP:journals/jacm/BeeriV84}
\bibfield{author}{\bibinfo{person}{Catriel Beeri} {and} \bibinfo{person}{Moshe~Y. Vardi}.} \bibinfo{year}{1984}\natexlab{}.
\newblock \showarticletitle{A Proof Procedure for Data Dependencies}.
\newblock \bibinfo{journal}{\emph{J. {ACM}}} \bibinfo{volume}{31}, \bibinfo{number}{4} (\bibinfo{year}{1984}), \bibinfo{pages}{718--741}.
\newblock


\bibitem[Bertossi and Chomicki(2003)]%
        {DBLP:conf/dagstuhl/BertossiC03}
\bibfield{author}{\bibinfo{person}{Leopoldo~E. Bertossi} {and} \bibinfo{person}{Jan Chomicki}.} \bibinfo{year}{2003}\natexlab{}.
\newblock \showarticletitle{Query Answering in Inconsistent Databases}. In \bibinfo{booktitle}{\emph{Logics for Emerging Applications of Databases}}. \bibinfo{publisher}{Springer}, \bibinfo{pages}{43--83}.
\newblock


\bibitem[Betzler et~al\mbox{.}(2009)]%
        {DBLP:conf/ijcai/BetzlerHN09}
\bibfield{author}{\bibinfo{person}{Nadja Betzler}, \bibinfo{person}{Susanne Hemmann}, {and} \bibinfo{person}{Rolf Niedermeier}.} \bibinfo{year}{2009}\natexlab{}.
\newblock \showarticletitle{A Multivariate Complexity Analysis of Determining Possible Winners Given Incomplete Votes}. In \bibinfo{booktitle}{\emph{{IJCAI}}}. \bibinfo{pages}{53--58}.
\newblock


\bibitem[Boutilier et~al\mbox{.}(2004)]%
        {Boutilier_2004}
\bibfield{author}{\bibinfo{person}{C. Boutilier}, \bibinfo{person}{R.~I. Brafman}, \bibinfo{person}{C. Domshlak}, \bibinfo{person}{H.~H. Hoos}, {and} \bibinfo{person}{D. Poole}.} \bibinfo{year}{2004}\natexlab{}.
\newblock \showarticletitle{CP-nets: A Tool for Representing and Reasoning withConditional Ceteris Paribus Preference Statements}.
\newblock \bibinfo{journal}{\emph{Journal of Artificial Intelligence Research}}  \bibinfo{volume}{21} (\bibinfo{date}{Feb.} \bibinfo{year}{2004}), \bibinfo{pages}{135–191}.
\newblock
\showISSN{1076-9757}
\urldef\tempurl%
\url{https://doi.org/10.1613/jair.1234}
\showDOI{\tempurl}


\bibitem[Bredereck et~al\mbox{.}(2018)]%
        {DBLP:conf/aaai/BredereckFILS18}
\bibfield{author}{\bibinfo{person}{Robert Bredereck}, \bibinfo{person}{Piotr Faliszewski}, \bibinfo{person}{Ayumi Igarashi}, \bibinfo{person}{Martin Lackner}, {and} \bibinfo{person}{Piotr Skowron}.} \bibinfo{year}{2018}\natexlab{}.
\newblock \showarticletitle{Multiwinner Elections With Diversity Constraints}. In \bibinfo{booktitle}{\emph{{AAAI}}}. \bibinfo{publisher}{{AAAI} Press}, \bibinfo{pages}{933--940}.
\newblock


\bibitem[Brill et~al\mbox{.}(2023)]%
        {ijcai2023p282}
\bibfield{author}{\bibinfo{person}{Markus Brill}, \bibinfo{person}{Evangelos Markakis}, \bibinfo{person}{Georgios Papasotiropoulos}, {and} \bibinfo{person}{Jannik Peters}.} \bibinfo{year}{2023}\natexlab{}.
\newblock \showarticletitle{Proportionality Guarantees in Elections with Interdependent Issues}. In \bibinfo{booktitle}{\emph{Proceedings of the Thirty-Second International Joint Conference on Artificial Intelligence, {IJCAI-23}}}, \bibfield{editor}{\bibinfo{person}{Edith Elkind}} (Ed.). \bibinfo{publisher}{International Joint Conferences on Artificial Intelligence Organization}, \bibinfo{pages}{2537--2545}.
\newblock
\urldef\tempurl%
\url{https://doi.org/10.24963/ijcai.2023/282}
\showDOI{\tempurl}
\newblock
\shownote{Main Track}.


\bibitem[Brill and Peters(2023)]%
        {DBLP:conf/sigecom/Brill023}
\bibfield{author}{\bibinfo{person}{M. Brill} {and} \bibinfo{person}{J. Peters}.} \bibinfo{year}{2023}\natexlab{}.
\newblock \showarticletitle{Robust and Verifiable Proportionality Axioms for Multiwinner Voting}. In \bibinfo{booktitle}{\emph{{EC}}}. \bibinfo{publisher}{{ACM}}, \bibinfo{pages}{301}.
\newblock


\bibitem[Cabannes(2004)]%
        {cabannes2004participatory}
\bibfield{author}{\bibinfo{person}{Yves Cabannes}.} \bibinfo{year}{2004}\natexlab{}.
\newblock \showarticletitle{Participatory budgeting: a significant contribution to participatory democracy}.
\newblock \bibinfo{journal}{\emph{Environment and urbanization}} \bibinfo{volume}{16}, \bibinfo{number}{1} (\bibinfo{year}{2004}), \bibinfo{pages}{27--46}.
\newblock


\bibitem[Celis et~al\mbox{.}(2018)]%
        {DBLP:conf/ijcai/CelisHV18}
\bibfield{author}{\bibinfo{person}{L.~Elisa Celis}, \bibinfo{person}{Lingxiao Huang}, {and} \bibinfo{person}{Nisheeth~K. Vishnoi}.} \bibinfo{year}{2018}\natexlab{}.
\newblock \showarticletitle{Multiwinner Voting with Fairness Constraints}. In \bibinfo{booktitle}{\emph{{IJCAI}}}. \bibinfo{publisher}{ijcai.org}, \bibinfo{pages}{144--151}.
\newblock


\bibitem[Chakraborty et~al\mbox{.}(2021)]%
        {DBLP:journals/tdasci/ChakrabortyDKKR21}
\bibfield{author}{\bibinfo{person}{Vishal Chakraborty}, \bibinfo{person}{Theo Delemazure}, \bibinfo{person}{Benny Kimelfeld}, \bibinfo{person}{Phokion~G. Kolaitis}, \bibinfo{person}{Kunal Relia}, {and} \bibinfo{person}{Julia Stoyanovich}.} \bibinfo{year}{2021}\natexlab{}.
\newblock \showarticletitle{Algorithmic Techniques for Necessary and Possible Winners}.
\newblock \bibinfo{journal}{\emph{Trans. Data Sci.}} \bibinfo{volume}{2}, \bibinfo{number}{3} (\bibinfo{year}{2021}), \bibinfo{pages}{22:1--22:23}.
\newblock


\bibitem[Dudycz et~al\mbox{.}(2020)]%
        {DBLP:conf/ijcai/DudyczMMS20}
\bibfield{author}{\bibinfo{person}{Szymon Dudycz}, \bibinfo{person}{Pasin Manurangsi}, \bibinfo{person}{Jan Marcinkowski}, {and} \bibinfo{person}{Krzysztof Sornat}.} \bibinfo{year}{2020}\natexlab{}.
\newblock \showarticletitle{Tight Approximation for Proportional Approval Voting}. In \bibinfo{booktitle}{\emph{{IJCAI}}}. \bibinfo{publisher}{ijcai.org}, \bibinfo{pages}{276--282}.
\newblock


\bibitem[Flesca et~al\mbox{.}(2010)]%
        {DBLP:journals/tods/FlescaFP10}
\bibfield{author}{\bibinfo{person}{Sergio Flesca}, \bibinfo{person}{Filippo Furfaro}, {and} \bibinfo{person}{Francesco Parisi}.} \bibinfo{year}{2010}\natexlab{}.
\newblock \showarticletitle{Querying and repairing inconsistent numerical databases}.
\newblock \bibinfo{journal}{\emph{{ACM} Trans. Database Syst.}} \bibinfo{volume}{35}, \bibinfo{number}{2} (\bibinfo{year}{2010}), \bibinfo{pages}{14:1--14:50}.
\newblock


\bibitem[Garey and Johnson(1979)]%
        {DBLP:books/fm/GareyJ79}
\bibfield{author}{\bibinfo{person}{M.~R. Garey} {and} \bibinfo{person}{David~S. Johnson}.} \bibinfo{year}{1979}\natexlab{}.
\newblock \bibinfo{booktitle}{\emph{Computers and Intractability: {A} Guide to the Theory of NP-Completeness}}.
\newblock \bibinfo{publisher}{W. H. Freeman}.
\newblock
\showISBNx{0-7167-1044-7}


\bibitem[Goldberg and Tarjan(1990)]%
        {DBLP:journals/mor/GoldbergT90}
\bibfield{author}{\bibinfo{person}{Andrew~V. Goldberg} {and} \bibinfo{person}{Robert~E. Tarjan}.} \bibinfo{year}{1990}\natexlab{}.
\newblock \showarticletitle{Finding Minimum-Cost Circulations by Successive Approximation}.
\newblock \bibinfo{journal}{\emph{Math. Oper. Res.}} \bibinfo{volume}{15}, \bibinfo{number}{3} (\bibinfo{year}{1990}), \bibinfo{pages}{430--466}.
\newblock


\bibitem[{Gurobi Optimization, LLC}(2024)]%
        {gurobi}
\bibfield{author}{\bibinfo{person}{{Gurobi Optimization, LLC}}.} \bibinfo{year}{2024}\natexlab{}.
\newblock \bibinfo{title}{{Gurobi Optimizer Reference Manual}}.
\newblock
\newblock
\urldef\tempurl%
\url{https://www.gurobi.com}
\showURL{%
\tempurl}


\bibitem[Hevia et~al\mbox{.}(2023)]%
        {hevia2023solvingedgecliquecover}
\bibfield{author}{\bibinfo{person}{Anthony Hevia}, \bibinfo{person}{Benjamin Kallus}, \bibinfo{person}{Summer McClintic}, \bibinfo{person}{Samantha Reisner}, \bibinfo{person}{Darren Strash}, {and} \bibinfo{person}{Johnathan Wilson}.} \bibinfo{year}{2023}\natexlab{}.
\newblock \bibinfo{title}{Solving Edge Clique Cover Exactly via Synergistic Data Reduction}.
\newblock
\newblock
\showeprint[arxiv]{2306.17804}~[cs.DS]
\urldef\tempurl%
\url{https://arxiv.org/abs/2306.17804}
\showURL{%
\tempurl}


\bibitem[Jain et~al\mbox{.}(2020)]%
        {DBLP:conf/ijcai/0001ST20}
\bibfield{author}{\bibinfo{person}{Pallavi Jain}, \bibinfo{person}{Krzysztof Sornat}, {and} \bibinfo{person}{Nimrod Talmon}.} \bibinfo{year}{2020}\natexlab{}.
\newblock \showarticletitle{Participatory Budgeting with Project Interactions}. In \bibinfo{booktitle}{\emph{{IJCAI}}}. \bibinfo{publisher}{ijcai.org}, \bibinfo{pages}{386--392}.
\newblock


\bibitem[Karp(1972)]%
        {Karp1972}
\bibfield{author}{\bibinfo{person}{Richard~M. Karp}.} \bibinfo{year}{1972}\natexlab{}.
\newblock \bibinfo{booktitle}{\emph{Reducibility among Combinatorial Problems}}.
\newblock \bibinfo{publisher}{Springer US}, \bibinfo{address}{Boston, MA}, \bibinfo{pages}{85--103}.
\newblock
\showISBNx{978-1-4684-2001-2}
\urldef\tempurl%
\url{https://doi.org/10.1007/978-1-4684-2001-2_9}
\showDOI{\tempurl}


\bibitem[Kimelfeld et~al\mbox{.}(2018)]%
        {DBLP:conf/ijcai/KimelfeldKS18}
\bibfield{author}{\bibinfo{person}{Benny Kimelfeld}, \bibinfo{person}{Phokion~G. Kolaitis}, {and} \bibinfo{person}{Julia Stoyanovich}.} \bibinfo{year}{2018}\natexlab{}.
\newblock \showarticletitle{Computational Social Choice Meets Databases}. In \bibinfo{booktitle}{\emph{{IJCAI}}}. \bibinfo{publisher}{ijcai.org}, \bibinfo{pages}{317--323}.
\newblock


\bibitem[Kimelfeld et~al\mbox{.}(2019)]%
        {DBLP:conf/pods/KimelfeldKT19}
\bibfield{author}{\bibinfo{person}{Benny Kimelfeld}, \bibinfo{person}{Phokion~G. Kolaitis}, {and} \bibinfo{person}{Muhammad Tibi}.} \bibinfo{year}{2019}\natexlab{}.
\newblock \showarticletitle{Query Evaluation in Election Databases}. In \bibinfo{booktitle}{\emph{{PODS}}}. \bibinfo{publisher}{{ACM}}, \bibinfo{pages}{32--46}.
\newblock


\bibitem[Lackner et~al\mbox{.}(2023)]%
        {DBLP:journals/jossw/LacknerRK23}
\bibfield{author}{\bibinfo{person}{Martin Lackner}, \bibinfo{person}{Peter Regner}, {and} \bibinfo{person}{Benjamin Krenn}.} \bibinfo{year}{2023}\natexlab{}.
\newblock \showarticletitle{abcvoting: {A} Python package for approval-based multi-winner voting rules}.
\newblock \bibinfo{journal}{\emph{J. Open Source Softw.}} \bibinfo{volume}{8}, \bibinfo{number}{81} (\bibinfo{year}{2023}), \bibinfo{pages}{4880}.
\newblock


\bibitem[Lackner and Skowron(2018)]%
        {DBLP:conf/ijcai/LacknerS18}
\bibfield{author}{\bibinfo{person}{Martin Lackner} {and} \bibinfo{person}{Piotr Skowron}.} \bibinfo{year}{2018}\natexlab{}.
\newblock \showarticletitle{Approval-Based Multi-Winner Rules and Strategic Voting}. In \bibinfo{booktitle}{\emph{{IJCAI}}}. \bibinfo{publisher}{ijcai.org}, \bibinfo{pages}{340--346}.
\newblock


\bibitem[Lackner and Skowron(2021)]%
        {DBLP:journals/jet/LacknerS21}
\bibfield{author}{\bibinfo{person}{Martin Lackner} {and} \bibinfo{person}{Piotr Skowron}.} \bibinfo{year}{2021}\natexlab{}.
\newblock \showarticletitle{Consistent approval-based multi-winner rules}.
\newblock \bibinfo{journal}{\emph{J. Econ. Theory}}  \bibinfo{volume}{192} (\bibinfo{year}{2021}), \bibinfo{pages}{105173}.
\newblock
\urldef\tempurl%
\url{https://doi.org/10.1016/J.JET.2020.105173}
\showDOI{\tempurl}


\bibitem[Lackner and Skowron(2023)]%
        {lackner2023multi}
\bibfield{author}{\bibinfo{person}{Martin Lackner} {and} \bibinfo{person}{Piotr Skowron}.} \bibinfo{year}{2023}\natexlab{}.
\newblock \bibinfo{booktitle}{\emph{Multi-winner voting with approval preferences}}.
\newblock \bibinfo{publisher}{Springer Nature}.
\newblock


\bibitem[Lang and Skowron(2018)]%
        {DBLP:journals/ai/LangS18}
\bibfield{author}{\bibinfo{person}{J{\'{e}}r{\^{o}}me Lang} {and} \bibinfo{person}{Piotr Skowron}.} \bibinfo{year}{2018}\natexlab{}.
\newblock \showarticletitle{Multi-attribute proportional representation}.
\newblock \bibinfo{journal}{\emph{Artif. Intell.}}  \bibinfo{volume}{263} (\bibinfo{year}{2018}), \bibinfo{pages}{74--106}.
\newblock


\bibitem[Markakis and Papasotiropoulos(2020)]%
        {DBLP:conf/ijcai/MarkakisP20}
\bibfield{author}{\bibinfo{person}{Evangelos Markakis} {and} \bibinfo{person}{Georgios Papasotiropoulos}.} \bibinfo{year}{2020}\natexlab{}.
\newblock \showarticletitle{Computational Aspects of Conditional Minisum Approval Voting in Elections with Interdependent Issues}. In \bibinfo{booktitle}{\emph{{IJCAI}}}. \bibinfo{publisher}{ijcai.org}, \bibinfo{pages}{304--310}.
\newblock


\bibitem[Masařík et~al\mbox{.}(2023)]%
        {masařík2023generalisedtheoryproportionalitycollective}
\bibfield{author}{\bibinfo{person}{Tomáš Masařík}, \bibinfo{person}{Grzegorz Pierczyński}, {and} \bibinfo{person}{Piotr Skowron}.} \bibinfo{year}{2023}\natexlab{}.
\newblock \bibinfo{title}{A Generalised Theory of Proportionality in Collective Decision Making}.
\newblock
\newblock
\showeprint[arxiv]{2307.06077}~[cs.GT]
\urldef\tempurl%
\url{https://arxiv.org/abs/2307.06077}
\showURL{%
\tempurl}


\bibitem[Mavrov et~al\mbox{.}(2023)]%
        {mavrov2023fairmultiwinnerelectionsallocation}
\bibfield{author}{\bibinfo{person}{Ivan-Aleksandar Mavrov}, \bibinfo{person}{Kamesh Munagala}, {and} \bibinfo{person}{Yiheng Shen}.} \bibinfo{year}{2023}\natexlab{}.
\newblock \bibinfo{title}{Fair Multiwinner Elections with Allocation Constraints}.
\newblock
\newblock
\showeprint[arxiv]{2305.02868}~[cs.GT]
\urldef\tempurl%
\url{https://arxiv.org/abs/2305.02868}
\showURL{%
\tempurl}


\bibitem[Relia(2022)]%
        {DBLP:conf/ijcai/Relia22}
\bibfield{author}{\bibinfo{person}{Kunal Relia}.} \bibinfo{year}{2022}\natexlab{}.
\newblock \showarticletitle{DiRe Committee : Diversity and Representation Constraints in Multiwinner Elections}. In \bibinfo{booktitle}{\emph{{IJCAI}}}. \bibinfo{publisher}{ijcai.org}, \bibinfo{pages}{5143--5149}.
\newblock


\bibitem[Rey et~al\mbox{.}(2023)]%
        {Rey2023AGF}
\bibfield{author}{\bibinfo{person}{Simon Rey}, \bibinfo{person}{Ulle Endriss}, {and} \bibinfo{person}{Ronald~De Haan}.} \bibinfo{year}{2023}\natexlab{}.
\newblock \showarticletitle{A general framework for participatory budgeting with additional constraints}.
\newblock \bibinfo{journal}{\emph{Social Choice and Welfare}} (\bibinfo{date}{jul} \bibinfo{year}{2023}).
\newblock
\urldef\tempurl%
\url{https://doi.org/10.1007/s00355-023-01462-6}
\showDOI{\tempurl}


\bibitem[Ross et~al\mbox{.}(1998)]%
        {DBLP:journals/tcs/RossSSS98}
\bibfield{author}{\bibinfo{person}{Kenneth~A. Ross}, \bibinfo{person}{Divesh Srivastava}, \bibinfo{person}{Peter~J. Stuckey}, {and} \bibinfo{person}{S. Sudarshan}.} \bibinfo{year}{1998}\natexlab{}.
\newblock \showarticletitle{Foundations of Aggregation Constraints}.
\newblock \bibinfo{journal}{\emph{Theor. Comput. Sci.}} \bibinfo{volume}{193}, \bibinfo{number}{1-2} (\bibinfo{year}{1998}), \bibinfo{pages}{149--179}.
\newblock


\bibitem[Skowron et~al\mbox{.}(2016)]%
        {DBLP:journals/ai/SkowronFL16}
\bibfield{author}{\bibinfo{person}{Piotr Skowron}, \bibinfo{person}{Piotr Faliszewski}, {and} \bibinfo{person}{J{\'{e}}r{\^{o}}me Lang}.} \bibinfo{year}{2016}\natexlab{}.
\newblock \showarticletitle{Finding a collective set of items: From proportional multirepresentation to group recommendation}.
\newblock \bibinfo{journal}{\emph{Artif. Intell.}}  \bibinfo{volume}{241} (\bibinfo{year}{2016}), \bibinfo{pages}{191--216}.
\newblock
\urldef\tempurl%
\url{https://doi.org/10.1016/J.ARTINT.2016.09.003}
\showDOI{\tempurl}


\bibitem[Thiele(1895)]%
        {thiele}
\bibfield{author}{\bibinfo{person}{Thorvald~N Thiele}.} \bibinfo{year}{1895}\natexlab{}.
\newblock \showarticletitle{Om flerfoldsvalg}.
\newblock \bibinfo{journal}{\emph{Oversigt over det Kongelige Danske Videnskabernes Selskabs Forhandlinger}}  \bibinfo{volume}{1895} (\bibinfo{year}{1895}), \bibinfo{pages}{415--441}.
\newblock


\bibitem[Vardi(1982)]%
        {DBLP:conf/stoc/Vardi82}
\bibfield{author}{\bibinfo{person}{Moshe~Y. Vardi}.} \bibinfo{year}{1982}\natexlab{}.
\newblock \showarticletitle{The Complexity of Relational Query Languages (Extended Abstract)}. In \bibinfo{booktitle}{\emph{{STOC}}}. \bibinfo{publisher}{{ACM}}, \bibinfo{pages}{137--146}.
\newblock


\bibitem[Xia(2012)]%
        {DBLP:conf/sigecom/Xia12}
\bibfield{author}{\bibinfo{person}{Lirong Xia}.} \bibinfo{year}{2012}\natexlab{}.
\newblock \showarticletitle{Computing the margin of victory for various voting rules}. In \bibinfo{booktitle}{\emph{{EC}}}. \bibinfo{publisher}{{ACM}}, \bibinfo{pages}{982--999}.
\newblock


\bibitem[Yang(2014)]%
        {DBLP:conf/ecai/Yang14}
\bibfield{author}{\bibinfo{person}{Yongjie Yang}.} \bibinfo{year}{2014}\natexlab{}.
\newblock \showarticletitle{Election Attacks with Few Candidates}. In \bibinfo{booktitle}{\emph{{ECAI}}} \emph{(\bibinfo{series}{Frontiers in Artificial Intelligence and Applications}, Vol.~\bibinfo{volume}{263})}. \bibinfo{publisher}{{IOS} Press}, \bibinfo{pages}{1131--1132}.
\newblock


\bibitem[Yang and Wang(2018)]%
        {DBLP:conf/ijcai/YangW18}
\bibfield{author}{\bibinfo{person}{Yongjie Yang} {and} \bibinfo{person}{Jianxin Wang}.} \bibinfo{year}{2018}\natexlab{}.
\newblock \showarticletitle{Multiwinner Voting with Restricted Admissible Sets: Complexity and Strategyproofness}. In \bibinfo{booktitle}{\emph{{IJCAI}}}. \bibinfo{publisher}{ijcai.org}, \bibinfo{pages}{576--582}.
\newblock


\end{thebibliography}

\def\cin#1{c_{#1}^{\mathsf{in}}}
\def\cout#1{c_{#1}^{\mathsf{o}}}
\def\cst#1{\mathrm{cost}(c_{#1})}

\newcommand{\repthm}[2]{
\smallskip
\noindent \textbf{Theorem #1.}\, {\em #2}
\smallskip
}

\appendix

\section{Proof of Theorem~\ref{thm:tgd}}

\repthm{\ref{thm:tgd}}{\thmtgdtxt}

We prove \Cref{thm:tgd} by considering several cases that we handle in the following lemmas.
For the problem of determining whether a legal committee exists,
membership in NP is straightforward; the legal committee itself can be a witness of a ``yes'' instance. Hence, the proofs of NP-completeness will focus on hardness.

\begin{lemma}\label{lemma:t=1NPC}
In the case of $t=1$, and assuming no key constraints, it is NP-complete to determine whether a legal committee exists. 
\end{lemma}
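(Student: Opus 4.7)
Membership in NP is immediate, since a candidate committee $B$ of size $k$ is a polynomial-size certificate and checking $\dcom{D}{B}\models\Gamma$ (i.e., that every $x\in R_1^D$ has some $c\in B$ with $(c,x)\in S_1^D$) takes polynomial time. The task is therefore to prove NP-hardness, and the plan is to reduce from the classical \textsc{Set Cover} decision problem, which is NP-complete~\cite{Karp1972}.

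Given a \textsc{Set Cover} instance $(U,\mathcal{F},k)$ with universe $U=\{u_1,\dots,u_N\}$ and family $\mathcal{F}=\{F_1,\dots,F_m\}$ of subsets of $U$, I will build an ABC-with-context instance as follows. The candidate set is $C=\{c_1,\dots,c_m\}$, one per set in $\mathcal{F}$, and I use an arbitrary voter profile $A$ (any $A$ will do, since the question is only whether a legal committee exists). The database $D$ interprets $R_1^D\defeq U$ and $S_1^D\defeq\{(c_j,u_i):u_i\in F_j\}$. The desired committee size is $k$, and $\Gamma$ is the single TGD in the statement of the theorem. Since the problem allows arbitrary first attributes of $S_1$, no key constraint is imposed, which matches the hypothesis of the lemma.

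Correctness of the reduction is straightforward: a subset $B\subseteq C$ with $|B|=k$ is a legal committee exactly when, for every $u_i\in R_1^D$, there is some $c_j\in B$ with $(c_j,u_i)\in S_1^D$; equivalently, the collection $\{F_j:c_j\in B\}$ covers $U$. Hence, a legal committee of size $k$ exists if and only if the \textsc{Set Cover} instance has a solution of size $k$. The construction is clearly polynomial in the input.

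\textbf{Where the work lies.} The main thing to get right is that the reduction stays inside the \emph{data complexity} setting of the theorem: the schema $\scs$ and the constraint set $\Gamma$ are fixed (one unary relation, one binary relation, one TGD), and only the database $D$, the candidate/voter data, and $k$ vary with the input. The reduction above respects this, since $R_1^D$ and $S_1^D$ are the only objects that grow with the \textsc{Set Cover} instance. There is no subtle issue with the scoring rule AV, because the lemma concerns only \emph{existence} of a legal committee. A small sanity check I would perform is to confirm that the absence of the key assumption is essential here: with $(c_j,u_i)\in S_1^D$ for each pair, many $c_j$'s may share the same second attribute $u_i$ (different sets containing the same element), which is precisely what a key on the first attribute would \emph{not} preclude but a key on the second attribute would — so the reduction is compatible with ``no key constraint'' as stated.
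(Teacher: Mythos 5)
Your proof is correct and follows essentially the same route as the paper: a reduction from Set Cover in which candidates correspond to sets, $R_1^D$ holds the universe elements, $S_1^D$ records set membership, and a legal committee of size $k$ exists iff a cover of size $k$ does. Your added remarks on data complexity, the irrelevance of the voter profile, and compatibility with the absence of a key constraint are sound but do not change the argument.
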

\begin{proof}
We prove hardness by a reduction from the Set Cover problem. In this problem, we are given as input a set  $U=\{1,..,n\}$ of elements, a collection $Q=\{s_1,...,s_m\}$  of subsets of $U$, and an integer $k$. The goal is to determine whether $Q$ contains $k$ subsets whose union is equal to $U$. 

Given the input $(U,Q,k)$, we the set $C=\{c_1,...,c_m\}$ of candidates, and a database $D$ over $\scs$ with $R_1^D=\{(1),...,(n)\}$ and $S_1^D=\{(i,c_j)\mid i\in s_j\}$. According to this construction, the TGD states that every $i\in U$ is represented in $\comrel$ by a candidate $c_j$ such that $i\in s_j$. Hence, a legal committee of size $k$ exists if and only if there is a cover of $U$ by $k$ sets from $Q$. 
\end{proof}

The next lemma generalizes \Cref{lemma:t=1NPC}.
\begin{lemma}\label{lemma:$t>=1$,NP-C}
For all $t\ge1$, it is NP-complete to determine whether a legal committee exists is NP-complete, as long as at least one $S_\ell^D$  has no key constraints. 
\end{lemma}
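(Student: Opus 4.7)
\smallskip
\noindent \textbf{Proof proposal for \Cref{lemma:$t>=1$,NP-C}.}\,
The plan is to reduce directly from the $t=1$ hardness established in \Cref{lemma:t=1NPC} (equivalently, from Set Cover), using the one index $\ell$ at which no key constraint is imposed on $S_\ell$ to carry the entire reduction, and neutralizing the remaining $t-1$ TGDs by making their premises vacuous.

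Concretely, given a Set Cover instance $(U,Q,k)$ with $U=\{1,\dots,n\}$ and $Q=\{s_1,\dots,s_m\}$, I let $C=\{c_1,\dots,c_m\}$ be the candidate set and define a database $D$ over $\scs$ exactly as in the proof of \Cref{lemma:t=1NPC}, but placed at coordinate $\ell$: set $R_\ell^D=\{(1),\dots,(n)\}$ and $S_\ell^D=\{(c_j,i)\mid i\in s_j\}$. For every other index $i\in\{1,\dots,t\}\setminus\{\ell\}$, I set $R_i^D=\emptyset$ (and $S_i^D=\emptyset$, say). The approval profile and the committee size $k$ are inherited from the $t=1$ construction. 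Since $R_i^D=\emptyset$ for $i\ne\ell$, the corresponding TGD $\forall x[R_i(x)\rightarrow\exists c[S_i(c,x)\land\comrel(c)]]$ is vacuously satisfied by every extension $\dcom{D}{B}$, regardless of whether $S_i$ is subject to a key constraint. Therefore, a $k$-subset $B\subseteq C$ is a legal committee of the constructed instance if and only if it is a legal committee of the $t=1$ instance at coordinate $\ell$, which, by the argument in \Cref{lemma:t=1NPC}, happens if and only if $\{s_j : c_j\in B\}$ is a cover of $U$.

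Membership in NP is immediate: a committee $B$ itself serves as a polynomial-size certificate, and verifying $\dcom{D}{B}\models\Gamma$ amounts to checking, for each TGD in $\Gamma$ and each assignment $\alpha$ to the universal variables witnessing the premise in $\dcom{D}{B}$, that a witness $\beta$ for the existential variables exists; this check is polynomial in $|D|+|B|$ since the schema and $\Gamma$ are fixed. Combining this with the reduction yields NP-completeness.

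I do not anticipate a serious obstacle here: the crux of the hardness is already captured at $t=1$, and the only point requiring a bit of care is the observation that the padding with empty $R_i^D$ neutralizes the unused TGDs irrespective of the key-constraint status of the corresponding $S_i$. Hence the assumption that at least one $S_\ell$ is free of a key constraint is exactly what is needed to ``host'' the $t=1$ reduction, and the remaining $S_i$ can be arbitrarily keyed without affecting correctness.
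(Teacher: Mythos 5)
Your proposal is correct and follows essentially the same route as the paper: both host the $t=1$ Set Cover hardness at the coordinate $\ell$ whose $S_\ell$ is not keyed and neutralize the remaining TGDs. The only difference is cosmetic—you make the other premises vacuous via empty relations $R_i^D$, whereas the paper pads each extra coordinate with a single-tuple relation pointing to a fresh dummy candidate and sets $k'=k+1$; your variant is, if anything, slightly simpler and equally valid.
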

\begin{proof}
We reduce from the case of $t=1$ to $t'>1$. Given the input $(C,k,D)$ and $D$ for $t=1$, we construct the input $(C',k',D')$ for $t'>1$, as follows. Let $c$ be a new candidate not in $C$. Then $C'=C\cup\set{c}$ and $k'=k+1$. The relations of $D'$ for $R_1$ and $S_1$ are those of $D$, that is, 
$R_1^{D'}=R_1^D$ and $S_1^{D'}=S_1^D$.   
For $\ell>1$, we construct single-tuple relations $R_\ell^{D'}=\set{(m+1)}$ and $S_\ell^{D'}=\set{(m+1,c)}$. This construction ensures that a set $B$ of candidates is a legal committee for $(C,k,D)$ if and only if $B\cup\set{c}$ is a legal committee for $(C',k',D')$. Hence, the existence of a legal committee is equivalent between the cases, thus the correctness of the reduction.
\end{proof}

The following lemma states tractability for $t=1$ in the presence of a key constraint.
\begin{lemma}\label{lemma:$t=1$, key constraint, P}
Consider the case of $t=1$ where the first attribute in $S_1^D$ is a key constraint. For AV, a winning committee can be found in polynomial time.
\end{lemma}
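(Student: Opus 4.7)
The plan is to use the key constraint to reduce the problem to weighted $k$-selection under disjoint ``must-hit'' side constraints, which admits a simple greedy solution.

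First I would rewrite the AV objective in candidate form: for any committee $B$, $\score_f(B)=\sum_{c\in B}\mathrm{app}(c)$, where $\mathrm{app}(c)\defeq|\set{v\in V:c\in A(v)}|$. Next I would reformulate the TGD. Enumerate $R_1^D=\set{x_1,\ldots,x_r}$ and put $C_i\defeq\set{c\in C:(c,x_i)\in S_1^D}$. The TGD then asserts exactly that $B\cap C_i\neq\emptyset$ for every $i$. The crucial observation is that the key constraint on the first attribute of $S_1$ forces each candidate $c$ to lie in at most one $C_i$; hence $C_1,\ldots,C_r$ are pairwise disjoint. Feasibility is then immediate: a legal committee exists iff $r\leq k$ and every $C_i$ is nonempty. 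Assuming feasibility, the algorithm is the obvious greedy one: pick $c_i^*\in\arg\max_{c\in C_i}\mathrm{app}(c)$ for each $i$, let $B_0=\set{c_1^*,\ldots,c_r^*}$ (distinct by disjointness), and fill $B_0$ up to size $k$ with the $k-r$ candidates of highest $\mathrm{app}$-value in $C\setminus B_0$.

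Optimality is established by a two-phase exchange argument applied to an arbitrary legal committee $B'$. Phase one: for every $i$ with $c_i^*\notin B'$, swap some $c\in B'\cap C_i$ (which exists by legality) for $c_i^*$; feasibility is preserved because $c_i^*\in C_i$, and the score does not decrease by the choice of $c_i^*$. After this phase, $B_0\subseteq B'$. Phase two: as long as $B'\setminus B_0$ contains a candidate whose $\mathrm{app}$-value is strictly less than that of some candidate in $C\setminus B'$, swap the two; feasibility is untouched since $B_0$ is never disturbed, and the score does not decrease. Termination produces a committee of the same size, score at least that of $B'$, and exactly the form of the greedy output, proving its optimality.

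The main obstacle is only the bookkeeping in this exchange argument and verifying that the two swap rules do not interfere; the real content is the pairwise disjointness of the $C_i$'s, which is where the key constraint is used. Without the key, the greedy choice of $c_i^*$ can collide with that of $c_j^*$ for $j\neq i$, and one must fall back on a more involved matching- or flow-based argument (as in the $t=2$ case of the theorem). All the steps above plainly run in polynomial time in $|V|$, $|C|$, and $|D|$.
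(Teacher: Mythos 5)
Your proposal is correct and follows essentially the same route as the paper: use the key constraint on the first attribute of $S_1$ to partition the candidates into pairwise-disjoint groups $C_i$, greedily take a most-approved candidate from each group, and fill the remaining $k-r$ slots with the most-approved leftover candidates. The paper states the greedy algorithm without spelling out the optimality argument, whereas you additionally supply the (correct) exchange argument and the explicit feasibility criterion, which only elaborates the same idea.
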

\begin{proof}
In this case, we can solve the problem using a simple greedy algorithm. Since the candidate is a key in $S_1$, we can partition the candidate set $C$ according to the associated value in $S_1$. Hence, a partition $C_a$ is associated with a value $a$, and it includes all candidates $c$ such that $(c,a)\in S_1^D$. Then, for each $(a)\in R_1^D$ we select a candidate $c\in C_a$ with a maximal number of approvals (since the voting rule is AV). If $B$ already contains more than $k$ candidates at this point, then no legal committee exists. Otherwise, we complete $B$ to a committee of size $k$ by adding $k-|B|$ remaining candidates with a maximal number of approvals.
\end{proof}

Finally, the next lemma proves tractability in the case of $t=2$ in the presence of a key constraint in $S_1$ and $S_2$.
\begin{lemma}
\label{lemma:t=2P}
In the case of $t=2$ where the first attribute in $S_1$ and $S_2$ is a key constraint, we can find a winning committee under AV in a polynomial time. 
\end{lemma}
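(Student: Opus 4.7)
The plan is to reduce the problem to a minimum-cost flow computation with edge lower bounds, which is polynomial-time solvable by, e.g., the Goldberg-Tarjan algorithm cited in the paper. The crucial structural fact enabled by the key constraints is that every candidate $c$ has at most one $a \in R_1^D$ with $(c,a) \in S_1^D$ and at most one $b \in R_2^D$ with $(c,b) \in S_2^D$; write $x_1(c)$ and $x_2(c)$ for these values, using a fresh null symbol $*$ when no such value exists. A committee $B$ of size $k$ is then legal if and only if, for every $a \in R_1^D$, some $c \in B$ has $x_1(c) = a$, and symmetrically for every $b \in R_2^D$. Under AV the score of $B$ is $\sum_{c \in B} w(c)$, where $w(c)$ is the number of voters approving $c$, so the task is a combinatorial optimization over $k$-subsets with two ``cover one member per class'' constraints.

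I would build a flow network with source $s$, sink $t$, a node $A_a$ for each $a \in R_1^D \cup \{*\}$, and a node $B_b$ for each $b \in R_2^D \cup \{*\}$. For each $a \in R_1^D$ add an edge $(s, A_a)$ with capacity interval $[1, k]$ and cost $0$; add $(s, A_*)$ with interval $[0, k]$ and cost $0$; symmetrically add edges $(B_b, t)$ with interval $[1,k]$ for $b \in R_2^D$ and $[0,k]$ for $b = *$. For each candidate $c$ add a parallel edge from $A_{x_1(c)}$ to $B_{x_2(c)}$ of capacity $1$ and cost $-w(c)$. Seek an integral minimum-cost $s$-$t$ flow of total value exactly $k$ respecting the lower bounds; if none exists, declare that no legal committee exists, and otherwise output the set of candidates whose edges carry unit flow. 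Integrality of min-cost flow with integer capacities guarantees a $\{0,1\}$-valued solution on the candidate edges, yielding a committee of size $k$.

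For correctness, the lower bounds on the $s$-side (resp.\ $t$-side) force at least one selected candidate in each $R_1^D$-class (resp.\ $R_2^D$-class), which is exactly the pair of TGDs, and minimizing $-\sum_{c \in B} w(c)$ is maximizing the AV score. The main technical point I expect to handle carefully is combining edge lower bounds with an exact target flow value, but this is entirely standard: lower bounds are eliminated by the classical reduction to a feasibility circulation with an auxiliary source and sink, and total flow $k$ is enforced by capping the aggregate out-capacity of $s$. I do not foresee a substantive obstacle beyond bookkeeping; the decisive point is that under the key constraints each candidate's coverage contribution lies in exactly one $R_1^D$-class and one $R_2^D$-class, which is precisely what permits a single unit of flow routed through a candidate to satisfy (up to) one slot from each TGD simultaneously and thus captures the ``2-laminar'' structure in a flow.
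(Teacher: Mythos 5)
Your proposal is correct and takes essentially the same route as the paper: a reduction to minimum-cost flow in which $k$ units are routed from the $R_1^D$-classes through the candidates to the $R_2^D$-classes, with per-candidate costs encoding AV approvals and the key constraints guaranteeing each candidate sits in at most one class on each side. The differences are only in formulation — the paper splits each candidate into in/out nodes joined by a unit-capacity edge of cost $|V|-|V_j|$ and pads with $k-|R_1^D|$ and $k-|R_2^D|$ dummy class nodes so that a min-cost \emph{maximum} flow has value exactly $k$, whereas you use edge lower bounds, $*$-classes, negative costs $-w(c)$, and a prescribed flow value $k$, which are equivalent standard devices.
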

\begin{proof}
We solve the problem by a reduction to the Minimum Cost Maximum Flow (MCMF) problem.
The input to MCMF is a directed graph $(U,E)$ with distinguished vertices $s'$ and $t'$ in $U$, where every edge $e\in E$ has a capacity $\kappa_e$ and a cost $\rho_e$. A flow is a function $f:E\rightarrow\mathbb{R}$ such that $0\le f(e)\le\kappa_e$ for every $e\in E$; moreover, for every vertex $u\in U$, except for $s'$ and $t'$, it holds that $\sum_{e\in \mathrm{in}(u)} f(e)=\sum_{e\in\mathrm{out}(u)} f(e)$. 
Here, $\mathrm{in}(u)$ and $\mathrm{out}(u)$ are the sets of incoming and outgoing edges of $u$, respectively. By a maximum flow, we mean a flow $f$ that maximizes $\sum_{(s',u)\in E} f(s',u)$. A \emph{minimum cost maximum flow} is a maximum flow $f$ with a minimal cost, where the cost of a flow is  $\sum_{e\in E}f(e)\cdot \rho_e$. The flow $f$ is \emph{integral} if all values $f(e)$ are integers. It is known that whenever the capacities are natural numbers, an integral minimum cost maximum flow exists and, moreover, can be found in polynomial time~\cite[Chapter 9]{DBLP:books/daglib/0069809}.

First, we note that due to the key constraint, there is no legal committee if $|R_1^D|>k$ or $|R_2^D|>k$ (because of the key constraint every candidate $c$ has at most one $(a)\in R_1^D$ such that $(c,a)\in R_1^D$ and the same goes for $R_2^D$). 

Given the input $(C,k,D)$ to our problem, we define the network $(U,E)$ as follows. (See illustration in \Cref{fig:mcmf}.)
The vertex set $U$ consists of the following vertices.
\begin{itemize}
\item The source $s'$ and the sink $t'$;
\item Vertices $\cin j$ and $\cout j$ for every candidate $c_j\in C$;
\item A vertex $u_a$ for every $(a)\in R_1^D$ and a vertex $w_b$ for every $(b)\in R_2^D$;
\item A vertex $u'_i$ for all $i=1,\dots, k-|R^D_1|$ and a vertex 
$w'_i$ for all $i=1,\dots, k-|R^D_2|$. (Recall our assumption that $|R_1^D|\le k$ and $|R_2^D|\le k$.)
\end{itemize}


The edge set $E$ consists of the following edges, all with a unit capacity.
\begin{itemize}
\item A edge from $s'$ to every $u_a$ and $u'_i$;
\item An edge from $u_a$ to $\cin j$ for all tuples $(a)\in R_1^D$ and $(c_j,a)\in S_1^D$;
\item An edge from $u'_i$ to $\cin j$ for all 
$i=1,\dots, k-|R^D_1|$ and candidates $c_j\in C$;
\item An edge from $\cin j$ to $\cout j$ for all $c_j\in C$;
\item An edge from $\cout j$ to $w_b$ for all tuples $(b)\in R_2^D$ and $(c_j,b)\in S_2^D$;
\item An edge from $\cout j$ to $w'_i$ for all candidates $c_j\in C$ and $i=1,\dots, k-|R^D_2|$;
\item A edge from every $w_b$ and $w'_i$ to $t'$.
\end{itemize}


The cost of every edge is $0$, except for the edges $e$ from $\cin j$ and $\cout j$ where the cost of $e$ is $|V|-|V_j|$ where $V$ is the set of voters (as usual) and $V_j$ is the set of voters who approves $v_j$; in \Cref{fig:mcmf} we denote this number by $\cst j$.


\begin{figure}
  \centering
  \input{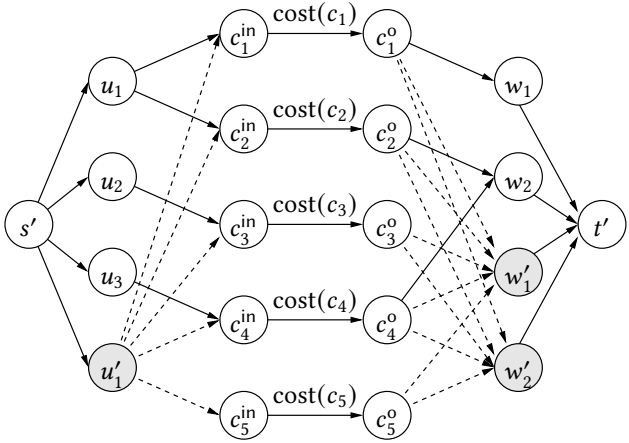}
   \caption{Network built in the reduction of \Cref{lemma:t=2P}.}
   \label{fig:mcmf}
\end{figure}

For illustration, in \Cref{fig:mcmf} you can see the network for the following input:
\begin{itemize}
    \item $C=\set{c_1,\dots,c_5}$
    \item $k=4$
    \item $R_1^D=\set{(1),(2),(3)}$
    \item $R_2^D=\set{(1),(2)}$
    \item $S_1^D=\set{(c_1,1),(c_2,1),(c_3,2),(c_4,3)}$
    \item $S_2^D=\set{(c_1,1),(c_2,2),(c_4,2)}$
\end{itemize}

If there is a legal committee $B$, then we can transfer a flow of $k$ units from $s'$ to $t'$. Indeed, for each $u_a$ we can select a destination $c_j\in B$ such that $(c_j,a)\in S_1^D$, and complete the $k$ units using edges from the $u'_i$. Note that we do not share the same $c_j$ for two $u_a$s due to the key constraint. We can similarly continue the flow through the $w_b$ and $w'_i$, and finally to $t'$. Similarly, a flow of size $k$ can be transformed into a legal committee $B$ by taking the candidates $c_j$ such that there is flow through $\cout j$. 

Hence, there is a correspondence between the legal committees and the (maximum) flows of amount $k$. From our definition of the costs of edges from $\cin j$ to $\cout j$, we conclude that a minimal-cost flow of amount $k$ selects a committee $B$ with a minimal cost that, due to the choice of cost, maximizes the sum $\sum_{c_j\in B}|V_j|$, that is, the AV score. Hence, a solution to MCMF yields a winning committee, as required. 
\end{proof}

Now, we prove the hardness in the case of $t=3$, and the presence of key constraints.
\begin{lemma}\label{lemma:t=3NPC}
Consider the case of $t=3$ where the first attribute in the relations $S_1^D$ and $S_2^D$, and $S_3^D$ are key constraints. It is NP-complete to determine whether a legal committee exists. 
\end{lemma}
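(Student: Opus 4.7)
The plan is to reduce from 3-Dimensional Matching (3DM)~\cite{Karp1972}, which is NP-complete: given disjoint sets $X,Y,Z$ with $|X|=|Y|=|Z|=q$ and a set $T\subseteq X\times Y\times Z$ of triples, decide whether some $M\subseteq T$ of size $q$ covers every element of $X\cup Y\cup Z$ exactly once. Membership in NP is immediate since a committee of size $k$ is a succinct witness, so only the hardness direction needs attention.

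Given a 3DM instance, the construction I have in mind is the following. Introduce one candidate $c_i$ per triple $t_i=(x_i,y_i,z_i)\in T$, set $k\defeq q$, and populate
\begin{align*}
R_1^D&\defeq\{(x):x\in X\},\quad R_2^D\defeq\{(y):y\in Y\},\quad R_3^D\defeq\{(z):z\in Z\},\\
S_1^D&\defeq\{(c_i,x_i):t_i\in T\},\; S_2^D\defeq\{(c_i,y_i):t_i\in T\},\; S_3^D\defeq\{(c_i,z_i):t_i\in T\}.
\end{align*}
Each candidate $c_i$ appears in exactly one tuple of each $S_j^D$, so the first attribute is indeed a key in all three relations, matching the hypotheses of the lemma. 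The approval profile may be chosen arbitrarily (even empty), because the question we reduce to is merely whether a \emph{legal} committee exists.

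For correctness, suppose $B$ is a legal committee of size $q$. The TGD for $R_1$ forces each $x\in X$ to be witnessed by some $c\in B$ with $(c,x)\in S_1^D$; since each candidate covers exactly one first coordinate in $S_1^D$ and $|X|=q=|B|$, distinct elements of $X$ must be witnessed by distinct members of $B$. Applying the same argument to $S_2^D$ and $S_3^D$, the $q$ triples selected by $B$ have pairwise-distinct projections on each side, i.e., they form a 3-dimensional matching. Conversely, the candidates corresponding to any 3DM solution form a legal committee of size $q$.

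The main obstacle I anticipate is not in the reduction itself, which is clean, but in choosing the right source problem: the Set-Cover reduction used in \Cref{lemma:t=1NPC} breaks once all three $S_j^D$ carry key constraints, because keys force each candidate to cover at most one element per side. The tripartite exactness of 3DM matches this restriction precisely, which is why ``exact matching by 3-sets'' is the natural hard problem to reduce from.
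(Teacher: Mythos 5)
Your reduction is correct and is essentially the same as the paper's: one candidate per triple of the 3-Dimensional Matching instance, $R_1,R_2,R_3$ interpreted as $X,Y,Z$, each $S_i$ mapping a candidate to its $i$-th coordinate (so the key constraint holds), and committee size $k=q$, with the same equivalence argument between legal committees and matchings of size $q$. No gaps; nothing further to add.
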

\begin{proof}
To show hardness, we reduce from the \emph{3-Dimensional Matching} (3-DM) problem, which is the following. Let $X$, and $Y$ and $Z$ be finite sets, and let $T$ be a subset of $X\times Y\times Z$ consists of triplets $(x,y,z)$ such that $x\in X$ and $y\in Y$ and $z\in Z$. $M\subseteq T$ is a 3-dimensional matching if the following holds: for every two distinct triplets $(x_1,y_1,z_1)\in M$ and $(x_2,y_2,z_2)\in M$, we get that $x_1\neq x_2$, that $y_1\neq y_2$, and that $z_1\neq z_2$. Finding $n$ size 3-dimensional matching in a given hypergraph, when $|X|=|Y|=|Z|=n$ is NP-Complete \cite{Karp1972}.

We are given the triple $(X,Y,Z)$ such that $|X|=|Y|=|Z|=n$. Let $T$ be a subset of $X\times Y\times Z$ consisting of triples $(x,y,z)$ such that $x\in X$, that $y\in Y$, and that $z\in Z$. We enumerate each triple in $T$ starting from $1$ to $|T|$.

We define $C=\set{1,...,|T|}$ the set of candidates, $k=n$ the committee size, and a database $D$ over $\scs$ such that $R_1^D=X$ and $R_2^D=Y$ and $R_3^D=Z$ (where we identify an element an element $a$ with the tuple $(a)$) and for $1\leq i\leq 3$, the relation $S_i^D$ consists of the tuples of $(c,a)$, such that $c\in C$ and $(a)\in R_i^D$ and the fit enumerated triplet to $c$ contains $a$ in the $i$ place.

Each element in a triplet is contained in a different relation $R_i^D$ (corresponding to its position in the triplet). This is due to the $S_i^D$ properties (where a candidate $j\in C$ and $(a)\in R_i^D$ satisfy $(j,a)\in S_i^D$ if and only if $a$ is the $i$ element in the corresponding $j$ triplet), we get that every candidate $c$ represents one triplet in $T$. Also, we satisfy the key constraint in the first attribute of $S_1^D$ and $S_2^D$, and $S_3^D$ (since every candidate $j$ is in the relation $S_i^D$ only with the corresponding element in the $j$ triplet in the $i$ place).
From these properties we get that there is 3-DM from size $n$ if and only if there is a legal committee given the input $(C,k,D)$, and, $t=3$ (with key constraint in the first attribute of $S_1^D$ and $S_2^D$, and $S_3^D$). 
\end{proof}

Lastly, similar to previous proofs, we reduce from the case where $t=3$ to the case where $t\ge 3$.
\begin{lemma}\label{lemma:t>3NPC}
Consider the case of $t\ge3$ where the first attribute is a key constraint for all the relations $S_i^D$, such that $1\leq i\leq t$. It is NP-complete to determine whether a legal committee exists.
\end{lemma}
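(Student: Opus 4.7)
The plan is to mirror the padding technique of Lemma~\ref{lemma:$t>=1$,NP-C} and reduce from the NP-complete $t=3$ case established in Lemma~\ref{lemma:t=3NPC} to the case of an arbitrary fixed $t\ge 3$ with key constraints on all $S_i$. Given an instance $(C,k,D)$ for $t=3$, I would construct an instance $(C',k',D')$ for the target $t>3$ as follows: introduce fresh candidates $c_4,\ldots,c_t$, one per additional relation, and set $C'=C\cup\set{c_4,\ldots,c_t}$ and $k'=k+(t-3)$. Preserve the original relations, $R_i^{D'}=R_i^D$ and $S_i^{D'}=S_i^D$ for $i\in\set{1,2,3}$, and for each $\ell\in\set{4,\ldots,t}$ introduce a fresh value $a_\ell$ and set $R_\ell^{D'}=\set{(a_\ell)}$ and $S_\ell^{D'}=\set{(c_\ell,a_\ell)}$. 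NP membership is immediate since a legal committee is a polynomial-size witness, so only hardness needs to be addressed by this reduction.

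The correctness argument rests on two observations. First, for each $\ell>3$, the TGD $\forall x[R_\ell(x)\rightarrow\exists c[S_\ell(c,x)\land\comrel(c)]]$ forces $c_\ell$ into any legal committee of $(C',k',D')$, because $c_\ell$ is the only candidate joined to $a_\ell$ through $S_\ell$. Second, the dummy candidates $c_4,\ldots,c_t$ do not occur in any of $S_1^{D'},S_2^{D'},S_3^{D'}$, and therefore cannot be used to satisfy the original three TGDs. Hence every legal committee of size $k'$ in the new instance must contain the entire set $\set{c_4,\ldots,c_t}$ together with a size-$k$ legal committee of the original $(C,k,D)$, establishing the biconditional needed for the reduction.

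The main (minor) obstacle is confirming that the key constraints carry over intact: on each $S_i^{D'}$ with $i\le 3$ this is immediate from $S_i^{D'}=S_i^D$, and on each new $S_\ell^{D'}$ it is trivial because $S_\ell^{D'}$ contains only one tuple. Since the construction is computable in polynomial time, NP-hardness transfers from $t=3$ to every fixed $t\ge 3$, completing the proof.
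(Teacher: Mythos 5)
Your reduction is correct: NP-membership is immediate, the singleton relations $R_\ell^{D'}=\{(a_\ell)\}$ and $S_\ell^{D'}=\{(c_\ell,a_\ell)\}$ trivially satisfy the key constraint on the first attribute, each added TGD forces its unique witness $c_\ell$ into every legal committee, the fresh candidates cannot serve as witnesses for the first three TGDs, and the bookkeeping $k'=k+(t-3)$ makes the biconditional go through in both directions. The overall strategy---a polynomial reduction from the $t=3$ case of Lemma~\ref{lemma:t=3NPC}---is the same as in the paper, but your gadget differs: the paper does not touch the candidate set or the committee size at all; it simply sets $R_\ell^{D'}=R_1^D$ and $S_\ell^{D'}=S_1^D$ for $3<\ell\le t$, so the extra TGDs are verbatim duplicates of the first one and the legal committees of the two instances coincide exactly. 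That duplication trick is slightly leaner, since nothing is forced into the committee and the only thing to observe is that repeated constraints have no effect, whereas your padding construction is the natural continuation of the paper's earlier no-key padding reduction (from $t=1$ to larger $t$) and has the mild advantage of being uniform across the key and no-key settings; both yield valid polynomial-time reductions establishing hardness for every fixed $t\ge 3$.
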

\begin{proof}
From \Cref{lemma:t=3NPC} we got that in the case of $t=3$, the problem is NP-complete. To show hardness, we reduce the case of $t=3$ to $t'\ge3$.

Given the input $(C,k,D)$ for $t=3$, we define a new database $D'$ for $t'\ge3$ such that for all $1\le i \le 3$ it holds that $R_i^{D'}=R_i^D$ and $S_i^{D'}=S_i^D$, furthermore, for all $3<i\le t'$ it holds that $R_i^{D'}=R_1^D$, and $S_i^{D'}=S_1^D$.

We get that for all $3<i\le t'$ the relations $R_i^{D'}$ and $S_i^{D'}$ have no additional effect on the legality of the committees. Therefore, there is a legal committee given the input $(C,k,D')$ for $t'\ge3$ if and only if there is a legal committee given the input $(C,k,D)$ for $t=3$.
\end{proof}

This completes the proof of \Cref{thm:tgd}.

\section{Proof of Theorem~\ref{thm:dc}}
\repthm{\ref{thm:dc}}{\thmdctxt}

We prove the theorem by considering several cases that we handle in the following lemmas. Again, for the problem of determining whether a legal committee exists, membership in NP is straightforward, and we focus on hardness in the proofs of NP-completeness. 

First, we prove that it is NP-complete to determine whether a legal committee exists. 
\begin{lemma}
Given the first argument of $R^D$ is not necessarily a key constraint, it is NP-complete to determine whether a legal committee exists.
\end{lemma}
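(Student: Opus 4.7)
\smallskip
\noindent\textbf{Proof plan.}
Membership in NP is immediate: a $k$-subset $B$ of $C$ serves as a witness, and checking whether $\dcom{D}{B}\models\Gamma$ reduces to scanning, for every pair in $B$, the (polynomially many) tuples of $R^D$ to verify the absence of a common second argument. Hence the plan focuses on NP-hardness.

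The idea is to view the DC as forbidding a committee from containing any two candidates that share a neighbor in $R$. Thus, for a fixed database $D$, legality is equivalent to being an independent set of size $k$ in the \emph{conflict graph} $G_R$ on $C$ whose edges join $c_1\neq c_2$ whenever some $x$ satisfies $R(c_1,x)\land R(c_2,x)$. Since the first attribute of $R$ is not a key, any desired conflict graph can be realized. I would therefore reduce from \textsc{Independent Set}, which is NP-complete \cite{Karp1972}.

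Given an instance $(G,k)$ of \textsc{Independent Set} with $G=(U,E)$, I would build the ABC instance as follows: take $C\defeq U$ as the candidate set, the same committee size $k$, an arbitrary approval profile (even empty), and any ABC scoring rule (the question is merely legality). The database $D$ is built by introducing, for each edge $e=\{u,v\}\in E$, a fresh value $x_e$ and inserting the two tuples $(u,x_e)$ and $(v,x_e)$ into $R^D$. By construction, two candidates $c_1\neq c_2$ share a common second argument in $R^D$ iff the unique $x_e$ witnessing this satisfies $e=\{c_1,c_2\}\in E$. Hence the conflict graph $G_R$ coincides exactly with $G$, and a $k$-subset $B\subseteq C$ is a legal committee iff $B$ is an independent set of size $k$ in $G$. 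The reduction is clearly polynomial.

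The only subtlety worth highlighting is why the \emph{absence} of the key constraint is essential: to encode a vertex $u$ of high degree in $G$, the candidate $u$ must appear with many distinct second arguments $x_e$ (one per incident edge), which is only possible when the first attribute of $R$ is not a key. This matches the tractable/intractable dichotomy in Theorem \ref{thm:dc}: with the key constraint, the conflict graph collapses into a disjoint union of cliques (one per value of $x$) and legality can be handled directly, whereas without it we can express arbitrary graphs. I expect no real obstacle beyond verifying the correctness of this equivalence and formally handling the (easy) case where $D$ contains tuples not used by any edge.
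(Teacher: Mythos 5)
Your proposal is correct and follows essentially the same route as the paper: a reduction from independent set in which each edge $e=\{u,v\}$ contributes a fresh value $x_e$ and the tuples $(u,x_e),(v,x_e)$ to $R^D$, so that legal committees are exactly the independent sets of size $k$. Your phrasing via the decision version of \textsc{Independent Set} is in fact slightly cleaner than the paper's appeal to the maximization version, but the construction and correctness argument are identical.
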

\begin{proof}
To show hardness, we reduce from the \emph{Maximal Independent Set} problem to our problem.

In the \emph{Maximum Independent Set} problem we are given an undirected graph $G=(V_G,E_G)$, and the goal is to find the size of the Maximum Independent Set. A Maximum Independent Set is a set of disjoint vertex of the graph with a maximum size. This is a known NP-hard problem \cite{DBLP:books/fm/GareyJ79}.

Given a graph $G=(V_G,E_G)$, such that $V_G=\set{1,..,m}$ is the group of vertex. We define $C=\set{1,\ldots,m}$ the candidate group, and a database $D$ over $\scs$, such that $R^D=\set{(j_1,a_{(j_1,j_2)})\mid (j_1,j_2)\in E_G} \cup \set{(j_2,a_{(j_1,j_2)})\mid (j_1,j_2)\in E_G}$.

Due to the properties of $R^D$, and the DC constraint, we get that every legal committee is an independent set, and vice versa. Therefore, we can find the maximum independent set size by searching for a legal committee, starting from $k=m$ to $k=1$. This concludes the hardness side.
\end{proof}

Next, we prove tractability in the case where the first attribute of $R^D$ is a key constraint.
\begin{lemma}
Given the first attribute in $R^D$ is a key constraint and AV as the voting rule, we can find a winning committee in polynomial time.
\end{lemma}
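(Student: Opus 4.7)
\smallskip
\noindent \textbf{Proof proposal.}
The plan is to exploit the key constraint to reduce the problem to sorting candidates by approval count. Since the first attribute of $R$ is a key, each candidate $c$ appears in at most one tuple $(c,a)\in R^D$, so we can partition $C$ into the groups $C_a\defeq\set{c\mid (c,a)\in R^D}$ (one group for each value $a$ occurring in the second attribute of $R^D$), together with the residual set $C_\bot\defeq\set{c\mid \text{there is no $a$ with } (c,a)\in R^D}$. By the key constraint, the $C_a$ are pairwise disjoint. The DC then says precisely that a committee $B$ is legal if and only if $|B\cap C_a|\leq 1$ for every $a$, while $B\cap C_\bot$ is unrestricted.

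First I would observe that, because AV's score decomposes additively over committee members as $\score_{\mathrm{AV}}(B)=\sum_{c\in B}|V(c)|$ where $V(c)\defeq\set{v\in V\mid c\in A(v)}$, an exchange argument is available: if $B$ is a winning legal committee and $c\in B\cap C_a$ is not a maximizer of $|V(\cdot)|$ within $C_a$, then replacing $c$ by the maximizer yields another legal committee (still at most one from $C_a$) whose score is at least that of $B$. Iterating this exchange across all $a$, one obtains a winning legal committee in which every chosen member of a group $C_a$ is a group-maximum.

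Next I would describe the algorithm: for each group $C_a$, designate a representative $c_a^*\in\arg\max_{c\in C_a}|V(c)|$, form the pool $P\defeq\set{c_a^*\mid a}\cup C_\bot$, and return the top $k$ elements of $P$ under the order induced by $|V(\cdot)|$, breaking ties arbitrarily. If $|P|<k$, declare that no legal committee exists. Correctness follows from the exchange argument above combined with the fact that any $k$-subset of $P$ is legal (since $P$ contains at most one member per $C_a$ and any number from $C_\bot$), so picking the $k$ elements of $P$ with the largest $|V(\cdot)|$ maximizes the AV score. The running time is clearly polynomial: computing the $|V(c)|$, grouping candidates by their $R$-value, picking per-group representatives, and sorting $P$ to take the top $k$ are all standard operations.

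The main obstacle I anticipate is not algorithmic but argumentative: making sure that the exchange argument is valid in the presence of the $C_\bot$ members (it is, since swapping within a $C_a$ preserves legality regardless of what happens in $C_\bot$), and carefully handling the infeasibility case where $|P|<k$ (which is in fact the only infeasibility, since any $k$-subset of $P$ is legal). Once those two points are pinned down, the polynomial-time bound is immediate.
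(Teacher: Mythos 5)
Your proposal is correct and follows essentially the same route as the paper's proof: use the key constraint to partition the candidates into pairwise-disjoint conflict groups determined by the second attribute of $R$, keep only a highest-AV-score candidate from each group (plus the candidates not appearing in $R$), and then select the top $k$ by approval count from the resulting pool, reporting infeasibility when the pool has fewer than $k$ candidates. Your explicit exchange argument and handling of the residual set just spell out the correctness claim the paper states more briefly.
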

\begin{proof}
To prove traceability we define the following polynomial algorithm.
First, we note that for each $(a)\in R^D$ there is a group of candidates that are in a so-called conflict, i.e. cannot be in the same committee.

We define $C'=C$. For all $(a)\in R$ we mark the group of candidates that are in conflict due to this element as $U_a$. In AV every candidate denotes a separate score to the committee score (the number of votes this candidate received). We sort $U_a$ based on this score, and remove the highest score candidate from it.
Furthermore, we subtract $C'=C'\setminus U_a$. We repeat this process for all $(a)\in R^D$. From the resulting $C'$ we choose the $k$ highest score candidates (just like in regular AV, without constraints) to be our committee (and return false if $|C'|<k$, i.e. there is no legal committee).

Due to the key constraint, all the groups $U_a$ are disjoint. Therefore, a winning committee that is composed of candidates from these groups contains at most one candidate from each group (only the highest score one). Hence, the algorithm returns a winning committee if exists, and false otherwise.
\end{proof}
This completes the last part in \Cref{thm:dc}.

\section{Additional Experiments}
\begin{figure*}[th]
\small
\centering
   \subfigure[Glasgow City Council]{
        \includegraphics[width=0.23\textwidth]{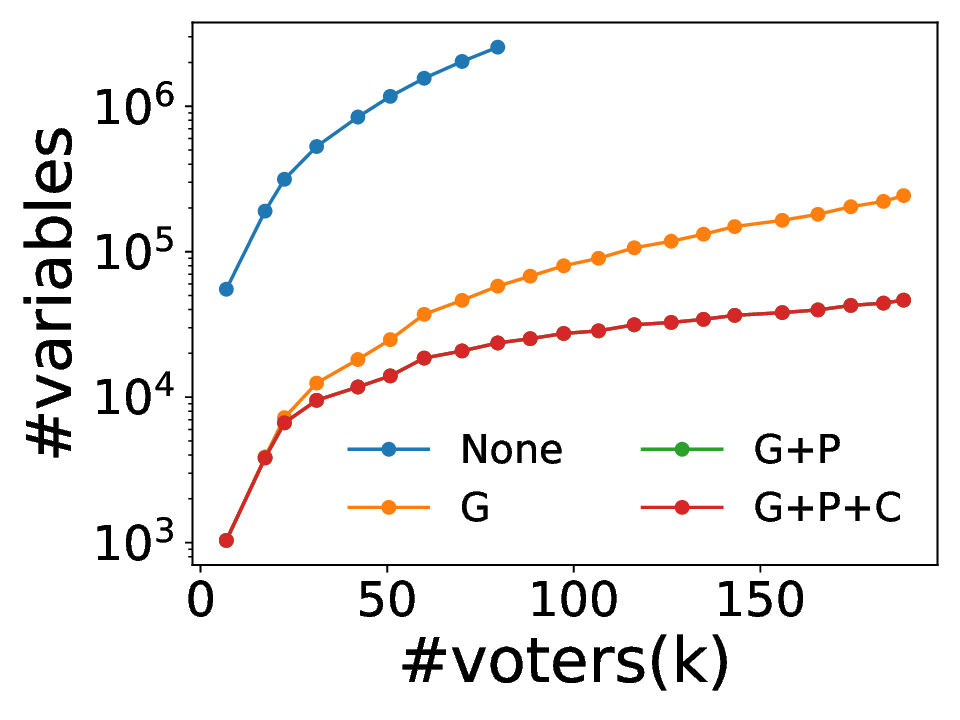}
        \label{fig:glasgow_election_different_optimizations_variables}
    }
    \subfigure[Trip Advisor]{
        \centering
        \includegraphics[width=0.23\textwidth]{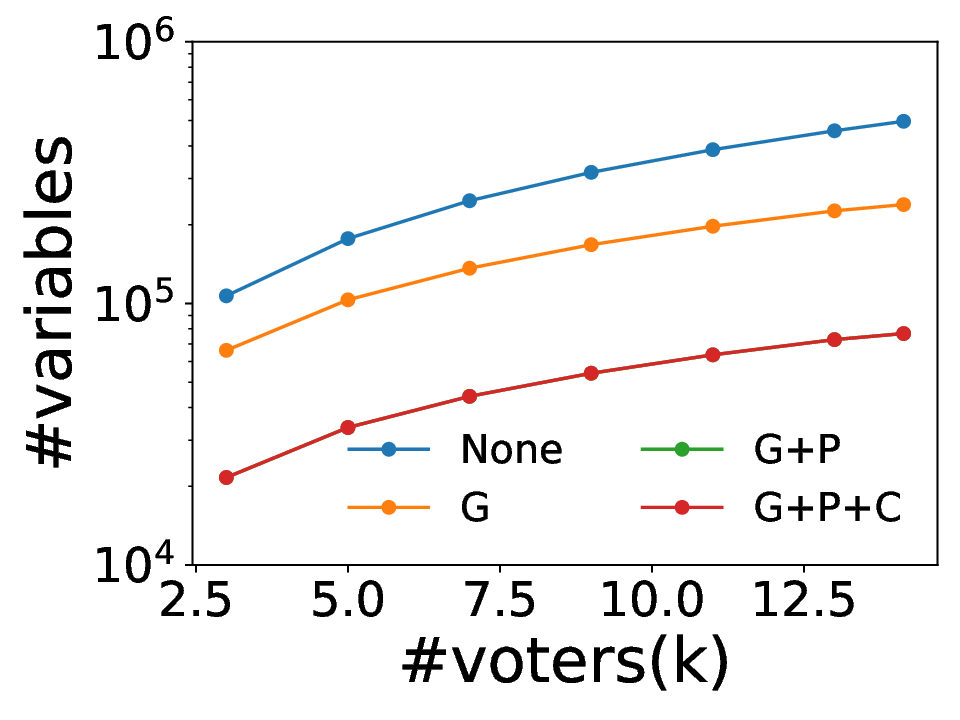}
        \label{fig:trip_advisor_different_optimizations_variables}
        }
    \subfigure[Movies Dataset]{
        \centering
        \includegraphics[width=0.23\textwidth]{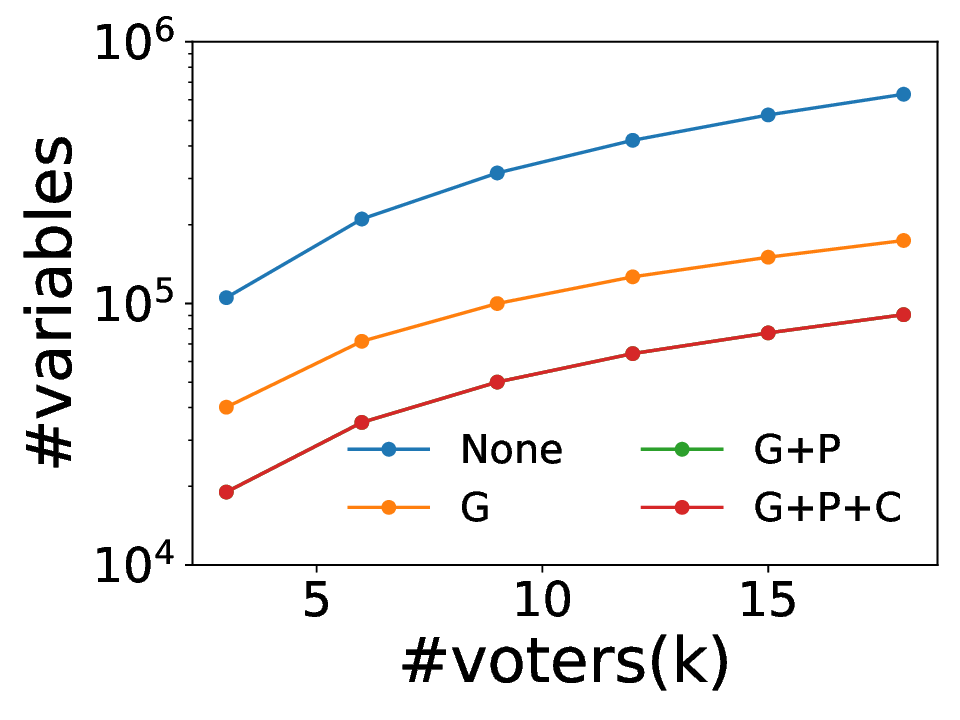}
        \label{fig:movies_different_optimizations_variables}
    }
    \subfigure[Movies Dataset]{
        \centering
        \includegraphics[width=0.23\textwidth]{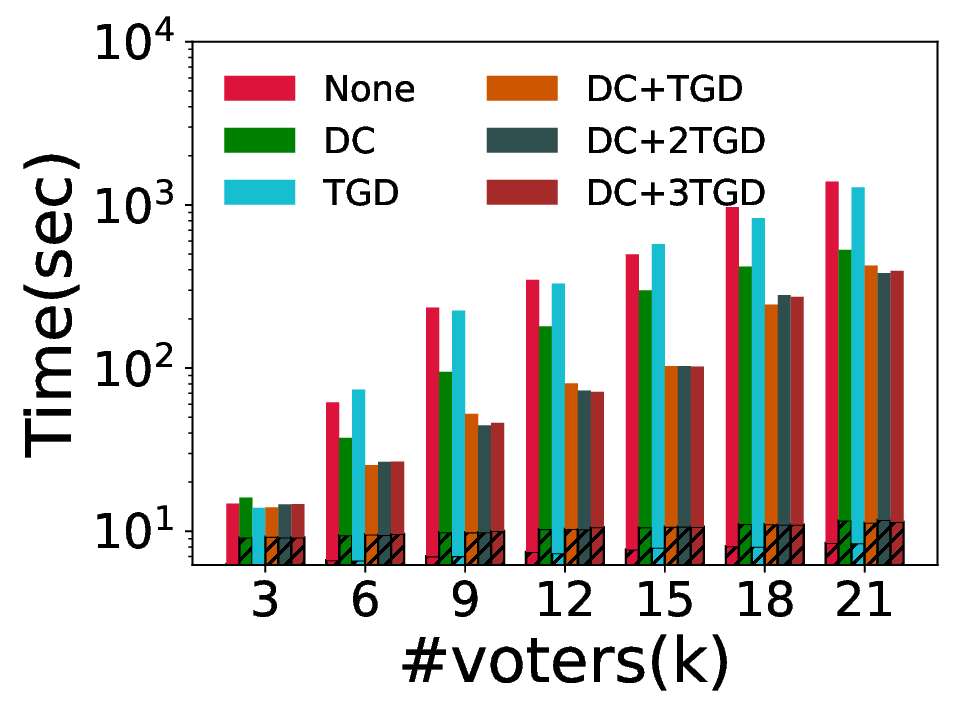}        
         \label{fig:movies_different_constraints_multiple_constraints}
    }
    \caption{The markings G, P, and C refer to the different optimizations - grouping similar voters, pruning infeasible scores, and contracting DC constraints via hypercliques accordingly.}
    \label{fig:different_optimizations_variables}
\end{figure*}
We give here additional experiments to complement those given in the Experimental Evaluation section. The experimental results are displayed in \Cref{fig:glasgow_election_different_optimizations_variables}, \Cref{fig:trip_advisor_different_optimizations_variables}, \Cref{fig:movies_different_optimizations_variables} are the corresponding optimization experiments of the Figures~\ref{fig:opt:const:glasgow},  \ref{fig:opt:const:trip}, \ref{fig:opt:const:movies} and 
Figures~\ref{fig:opt:time:glasgow},  \ref{fig:opt:time:trip}, \ref{fig:opt:time:movies}, except this is the measurement of the number of constraints under the different optimizations. As expected, the reduction of the optimizations on the number of variables is quite similar to the effect on the constraint number with the exception of the last optimization, contracting DC constraints via hypercliques (C). This is expected, because this optimization only contracts inequations, without any effect on the variables.

\begin{figure}[t] 
    \subfigure[Trip Advisor]{
    \centering
     \includegraphics[width=0.222\textwidth]{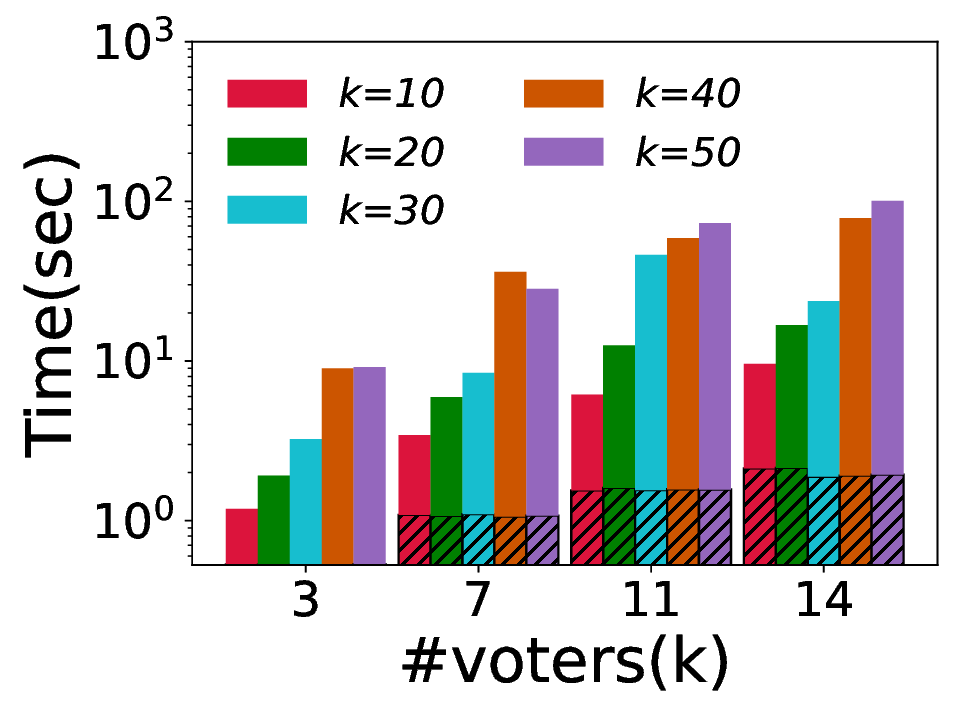}   
    \label{fig:commitee_size_trip_advisor_tgd}}
    \subfigure[Movies Dataset]{
    \centering
    \includegraphics[width=0.222\textwidth]{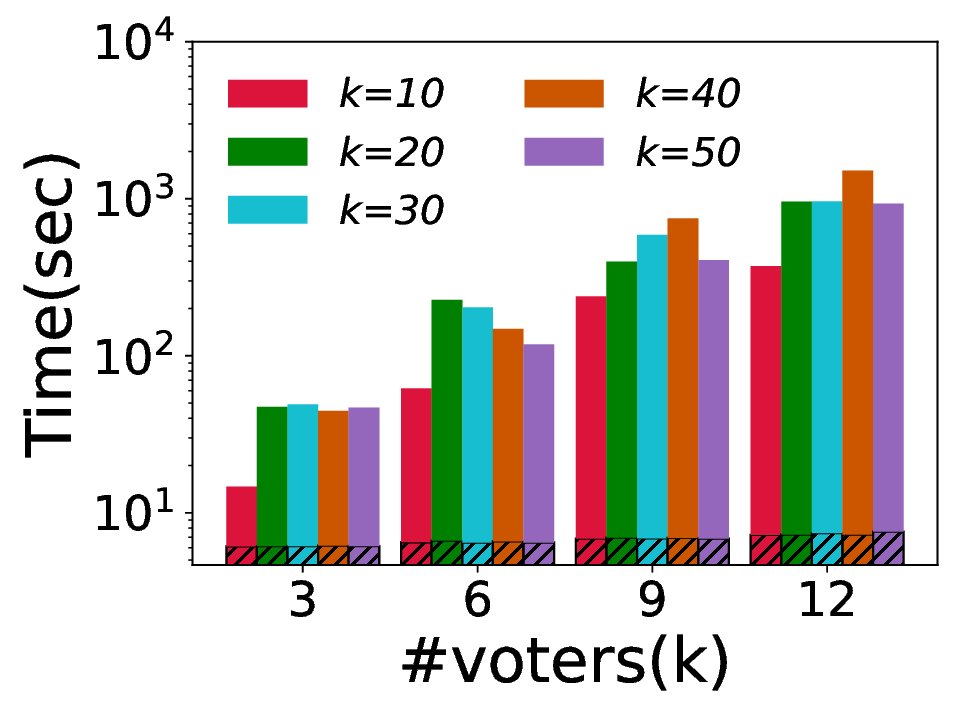}        
    \label{fig:commitee_size_movies_no_constraints}
    }
    \vskip-1em
    \caption{Computing time for varying committee size $k$. In the Trip Advisor dataset, the candidate group size is $|C|=1,845$, and we have one TGD, and for the Movies Dataset $|C|=100$ with no constraints.}
    \vskip-1em
\end{figure}

In \Cref{fig:commitee_size_trip_advisor_tgd} we see a direct continuation of the experiments in \Cref{fig:commitee_size}. In \Cref{fig:commitee_size_trip_advisor_tgd} we remove the DC from $\Gamma$, resulting in one TGD. This is to test whether the difference between \Cref{fig:trip_advisor_different_committee_size} and \Cref{fig:movies_different_committee_size} is due to the difference in $\Gamma$. In the first, there is a TGD and DC, and in the second there is only TGD (because adding the DC to the movies with a large committee size results in infeasible committees). After removing the DC we still get the same trend as in \Cref{fig:trip_advisor_different_committee_size}, therefore, we can conclude that the difference between \Cref{fig:trip_advisor_different_committee_size} and \Cref{fig:movies_different_committee_size} is not due to the difference in $\Gamma$.

In \Cref{fig:commitee_size_movies_no_constraints} we see a direct continuation of the experiments in \Cref{fig:commitee_size}. In \Cref{fig:commitee_size_movies_no_constraints} we remove the TGD from $\Gamma$, resulting in an empty set of constraints. This is to test whether the trend in \Cref{fig:movies_different_committee_size} is due to the TGD constraint. We get a slightly different result, but generally speaking, it still seems that the peek is not in $k=50$, but in $40$ or $20$.

In \Cref{fig:trip_advisor_different_committee_size}, there is a TGD and DC, and in \Cref{fig:movies_different_committee_size} there is only TGD (because adding the DC to the movies with a large committee size results in infeasible committees). After removing the DC we still get the same trend as in \Cref{fig:trip_advisor_different_committee_size}, therefore, we can conclude that the difference between \Cref{fig:trip_advisor_different_committee_size} and \Cref{fig:movies_different_committee_size} is not due to the difference in $\Gamma$.

In \Cref{fig:movies_different_constraints_multiple_constraints} we do similar experiments as \Cref{fig:dctgds:movies}, computing time for varying constraints number in the Movies Dataset. In \Cref{fig:movies_different_constraints_multiple_constraints} we add in addition to the default constraints of DC and TGD, two new cases, one 
DC with two TGDs, and one DC with three TGDs.

To do so, we first add to the database $D$ the relation $\rel{Duration}(c,t)$ that gives the duration range for each movie: \val{short}, \val{long} (derived based on median from the original dataset exact movie duration).
The two new TGD constraints are as follows.
\begin{itemize}
    \item There is at least one committee member for every movie duration category, \val{long} and \val{short}.
    \item There is at least one committee member for the genres of Comedy Drama and Action.
\end{itemize}

From this experiment, we learned that adding the third and fourth constraints had little to no impact on the running time compared to the case with two constraints.

\section{Case Study}\label{anacdotal_experiment}
In the experiment presented in \Cref{table:movies} we ran our MIP program in the Movies Dataset, where $k=5$, the voting rule is PAV and there are 3,000 voters. 
Here is a review of the results of this experiment, given different $\Gamma$ sets of constraints.

In the Movies Dataset, we define DC states that no two movies have the same genre (this DC is different from the default, where no three movies have the same genre because it is an experiment with a smaller committee than our default, $k=5$ instead of $k=10$). The TGD states that there is at least one movie in English, French, and Spanish (as the default).

When adding only a DC, we get a different committee. This is because the movies Judgment Night, The Dark, and Scarface all have the genre of a Thriller (a reminder that a movie can have multiple different genres). Therefore, the movies The Dark, and Scarface are changing. Furthermore, the movies 2001: A Space Odyssey and Three Colours: Red both have the Mystery genre. Therefore, 2001: A Space Odyssey is changed.

When adding only TGD, because we don't have representation for the Spanish language, we get that 2001: A Space Odyssey is replaced with Bad Education, a movie with Spanish as the original language.

Furthermore, when adding DC and TGD, we both need to solve the conflicts and have a representation of Spanish (without any conflicts). Therefore, the changes are different.

\section{Formal Phrasing of the Constraints}
As we explained in the Experimental Evaluation section, for each dataset we define two default constraints, a TGD and a DC. We also define two additional TGDs for the Movies Dataset for measuring the computation time for varying number of constraints. In addition, we have a simple DC for the Movies use case (since we define a smaller committee size) shown in \Cref{table:movies}. Here we present the formulation of all constraints in the formal framework (as opposed to the natural language in the body of the paper).

\def\colored#1{{\color{BlueViolet}#1}}

\subsection{Glasgow Dataset}
The TGD states that there is at least one committee member from each ward:
\colored{
\begin{align*}
\forall w\big[\rel{Wards}(w)\rightarrow\exists c 
    [\rel{Ward}(c,w)\land\comrel(c)]
\big]
\end{align*}
}

The DC states that no three committee members belong to the same party:
\colored{
\begin{align*}
\forall c_1,c_2,c_3,p 
\big[\neg\big(&\comrel(c_1)\land\comrel(c_2)\land\comrel(c_3)\land\\
     & \rel{Party}(c_1,p)\land\rel{Party}(c_2,p)\land\rel{Party}(c_3,p)\land\\
     &c_1\neq c_2 \land c_1\neq c_3 \land c_2\neq c_3\big)\big]
\end{align*}
}

\subsection{Trip Advisor Dataset}
For this TGD, we added a relation $\rel{Selected}(t,p)$ with six locations.
 
The TGD states that there is at least one low-price hotel in each selected location, for the selected city-country combinations:
\colored{
\begin{align*}
\forall l,p\big[\rel{Selected}(l,p)\rightarrow\exists c 
    [&\rel{Loacation}(c,l,p)\land\\&\rel{Price}(c,\val{low})\land\comrel(c)]
\big]
\end{align*}
}

The DC states that no two hotels have the same city, country, and price range:
 \colored{
 \begin{align*}
\forall c_1,c_2,l,p,r 
\big[\neg\big(&
\comrel(c_1)\land\comrel(c_2)\land c_1\neq c_2 \land
\\&
\rel{Location}(c_1,l,p)
\land\rel{Loacation}(c_2,l,p)\land
\\&
\rel{Price}(c_1,r)\land
\rel{Price}(c_2,r)\big)\big]
\end{align*}
}

\subsection{The Movies Dataset}

 The TGD states that there is at least one movie in English, French, and Spanish. For simplicity, we add the relation $\rel{SelectedLangs}$ that  contains the three languages. The TGD is then:

\colored{
\begin{align*}
\forall l\big[\rel{SelectedLangs}(l)\rightarrow\exists c 
    [\rel{Language}(c,l)\land\comrel(c)]
\big]
\end{align*}}
The DC states that no three movies have the same genre:
\colored{
 \begin{align*}
\forall c_1,c_2,c_3,g 
\big[\neg\big(&\comrel(c_1)\land\comrel(c_2)\land\comrel(c_3)\land\\
     & \rel{MovieGenre}(c_1,g)\land\rel{MovieGenre}(c_2,g)\land\\&\rel{MovieGenre}(c_3,g)\land c_1\neq c_2 \land c_1\neq c_3 \land \\&  c_2\neq c_3 \big)\big]
\end{align*}
}
\smallskip

The two additional TGDs are defined as follows. The relation $\rel{DurationCategory}$ contains the two length categories \val{long}, and \val{short} (based on the median).
The relation $\rel{SelectedGenres}$ contains the genres of Comedy, Drama, and Action.
The first TGD states that there is a movie from each duration category:
\colored{
\begin{align*}
\forall d\big[&\rel{DurationCategory}(d)\rightarrow\exists c [ \rel{Duration}(c,d) \land\comrel(c)
    ]
\big]
\end{align*}
}

\smallskip

The second TGD states that there is a movie from each selected genre:
\colored{
\begin{align*}
\forall g\big[\rel{SelectedGenres}(g)\rightarrow\exists c 
    [\rel{MovieGenre}(c,g) \land\comrel(c)]
\big]
\end{align*}
}

\smallskip

The additional DC, used in the case study, states that there no two movies in the same category:
\colored{
\begin{align*}
\forall c_1,c_2,g 
\big[\neg\big(&\comrel(c_1)\land\comrel(c_2)\land c_1\neq c_2 \land\\
     & \rel{MovieGenre}(c_1,g)\land\rel{MovieGenre}(c_2,g)  \big)\big]
\end{align*}
}





\end{document}